\documentclass[aps,amsmath,onecolumn,amssymb]{revtex4}
\usepackage{amssymb}
\usepackage{amsthm}
\usepackage{graphicx}% Include figure files
\usepackage{dcolumn}% Align table columns on decimal point
\usepackage{bm}% bold math
\usepackage{hyperref}
\usepackage{enumerate}
\usepackage{algorithm}
\usepackage{algpseudocode}

\newtheorem{theorem}{Theorem}
\newtheorem{lemma}{Lemma}
\newtheorem{definition}{Definition}
\newtheorem{corollary}{Corollary}
%    Q-circuit version 2
%    Copyright (C) 2004  Steve Flammia & Bryan Eastin
%    Last modified on: 9/16/2011
%
%    This program is free software; you can redistribute it and/or modify
%    it under the terms of the GNU General Public License as published by
%    the Free Software Foundation; either version 2 of the License, or
%    (at your option) any later version.
%
%    This program is distributed in the hope that it will be useful,
%    but WITHOUT ANY WARRANTY; without even the implied warranty of
%    MERCHANTABILITY or FITNESS FOR A PARTICULAR PURPOSE.  See the
%    GNU General Public License for more details.
%
%    You should have received a copy of the GNU General Public License
%    along with this program; if not, write to the Free Software
%    Foundation, Inc., 59 Temple Place, Suite 330, Boston, MA  02111-1307  USA

% Thanks to the Xy-pic guys, Kristoffer H Rose, Ross Moore, and Daniel Müllner,
% for their help in making Qcircuit work with Xy-pic version 3.8.  
% Thanks also to Dave Clader, Andrew Childs, Rafael Possignolo, Tyson Williams,
% Sergio Boixo, Cris Moore, Jonas Anderson, and Stephan Mertens for helping us test 
% and/or develop the new version.

\usepackage{xy}
\xyoption{matrix}
\xyoption{frame}
\xyoption{arrow}
\xyoption{arc}

\usepackage{ifpdf}
\ifpdf
\else
\PackageWarningNoLine{Qcircuit}{Qcircuit is loading in Postscript mode.  The Xy-pic options ps and dvips will be loaded.  If you wish to use other Postscript drivers for Xy-pic, you must modify the code in Qcircuit.tex}
%    The following options load the drivers most commonly required to
%    get proper Postscript output from Xy-pic.  Should these fail to work,
%    try replacing the following two lines with some of the other options
%    given in the Xy-pic reference manual.
\xyoption{ps}
\xyoption{dvips}
\fi

% The following resets Xy-pic matrix alignment to the pre-3.8 default, as
% required by Qcircuit.
\entrymodifiers={!C\entrybox}

\newcommand{\bra}[1]{{\left\langle{#1}\right\vert}}
\newcommand{\ket}[1]{{\left\vert{#1}\right\rangle}}
    % Defines Dirac notation. %7/5/07 added extra braces so that the commands will work in subscripts.
\newcommand{\qw}[1][-1]{\ar @{-} [0,#1]}
    % Defines a wire that connects horizontally.  By default it connects to the object on the left of the current object.
    % WARNING: Wire commands must appear after the gate in any given entry.
\newcommand{\qwx}[1][-1]{\ar @{-} [#1,0]}
    % Defines a wire that connects vertically.  By default it connects to the object above the current object.
    % WARNING: Wire commands must appear after the gate in any given entry.

    % Defines a classical wire that connects horizontally.  By default it connects to the object on the left of the current object.
    % WARNING: Wire commands must appear after the gate in any given entry.

    % Defines a classical wire that connects vertically.  By default it connects to the object above the current object.
    % WARNING: Wire commands must appear after the gate in any given entry.
\newcommand{\gate}[1]{*+<.6em>{#1} \POS ="i","i"+UR;"i"+UL **\dir{-};"i"+DL **\dir{-};"i"+DR **\dir{-};"i"+UR **\dir{-},"i" \qw}
    % Boxes the argument, making a gate.
\newcommand{\meter}{*=<1.8em,1.4em>{\xy ="j","j"-<.778em,.322em>;{"j"+<.778em,-.322em> \ellipse ur,_{}},"j"-<0em,.4em>;p+<.5em,.9em> **\dir{-},"j"+<2.2em,2.2em>*{},"j"-<2.2em,2.2em>*{} \endxy} \POS ="i","i"+UR;"i"+UL **\dir{-};"i"+DL **\dir{-};"i"+DR **\dir{-};"i"+UR **\dir{-},"i" \qw}
    % Inserts a measurement meter.
    % In case you're wondering, the constants .778em and .322em specify
    % one quarter of a circle with radius 1.1em.
    % The points added at + and - <2.2em,2.2em> are there to strech the
    % canvas, ensuring that the size is unaffected by erratic spacing issues
    % with the arc.

    % Inserts a measurement bubble with user defined text.

    % Inserts a measurement tab with user defined text.

    % Inserts a D-shaped measurement gate with user defined text.

    % Draws a multiple qubit measurement bubble starting at the current position and spanning #1 additional gates below.
    % #2 gives the label for the gate.
    % You must use an argument of the same width as #2 in \ghost for the wires to connect properly on the lower lines.

    % Draws a multiple qubit D-shaped measurement gate starting at the current position and spanning #1 additional gates below.
    % #2 gives the label for the gate.
    % You must use an argument of the same width as #2 in \ghost for the wires to connect properly on the lower lines.
\newcommand{\control}{*!<0em,.025em>-=-<.2em>{\bullet}}
    % Inserts an unconnected control.

    % Inserts a unconnected control-on-0.
\newcommand{\ctrl}[1]{\control \qwx[#1] \qw}
    % Inserts a control and connects it to the object #1 wires below.

    % Inserts a control-on-0 and connects it to the object #1 wires below.

    % Inserts a CNOT target.

    % Inserts half a swap gate.
    % Must be connected to the other swap with \qwx.

    % Draws a multiple qubit gate starting at the current position and spanning #1 additional gates below.
    % #2 gives the label for the gate.
    % You must use an argument of the same width as #2 in \ghost for the wires to connect properly on the lower lines.

    % Leaves space for \multigate on wires other than the one on which \multigate appears.  Without this command wires will cross your gate.
    % #1 should match the second argument in the corresponding \multigate.
\newcommand{\push}[1]{*{#1}}
    % Inserts #1, overriding the default that causes entries to have zero size.  This command takes the place of a gate.
    % Like a gate, it must precede any wire commands.
    % \push is useful for forcing columns apart.
    % NOTE: It might be useful to know that a gate is about 1.3 times the height of its contents.  I.e. \gate{M} is 1.3em tall.
    % WARNING: \push must appear before any wire commands and may not appear in an entry with a gate or label.

    % Constructs a box or bracket enclosing the square block spanning rows #1-#3 and columns=#2-#4.
    % The block is given a margin #5/2, so #5 should be a valid length.
    % #6 can take the following arguments -- or . or _\} or ^\} or \{ or \} or _) or ^) or ( or ) where the first two options yield dashed and
    % dotted boxes respectively, and the last eight options yield bottom, top, left, and right braces of the curly or normal variety.  See the Xy-pic reference manual for more options.
    % \gategroup can appear at the end of any gate entry, but it's good form to pick either the last entry or one of the corner gates.
    % BUG: \gategroup uses the four corner gates to determine the size of the bounding box.  Other gates may stick out of that box.  See \prop.

    % Centers the left side of #1 in the cell.  Intended for lining up wire labels.  Note that non-gates have default size zero.
\newcommand{\lstick}[1]{*!R!<.5em,0em>=<0em>{#1}}
    % Centers the right side of #1 in the cell.  Intended for lining up wire labels.  Note that non-gates have default size zero.

    % Centers the bottom of #1 in the cell.  Intended for lining up wire labels.  Note that non-gates have default size zero.

    % Centers the top of #1 in the cell.  Intended for lining up wire labels.  Note that non-gates have default size zero.
\newcommand{\Qcircuit}{\xymatrix @*=<0em>}
    % Defines \Qcircuit as an \xymatrix with entries of default size 0em.

    % Draws a wire or connecting line to the element #1 rows down and #2 columns forward.

    % Same as \ghost except it omits the wire leading to the left. 

%%%%%%%%%%%%%%%%%%%%%%% a bit nicer hypelinks %%%%%%%%%%%%%%%%%%%%%%%%%%%%%

\usepackage[usenames,dvipsnames]{xcolor}
\hypersetup{
    colorlinks=true,       % false: boxed links; true: colored links
    linkcolor=Maroon,          % color of internal links (change box color with linkbordercolor)
    citecolor=OliveGreen,        % color of links to bibliography
    filecolor=magenta,      % color of file links
    urlcolor=Blue           % color of external links
}

%%%%%%%%%%%%%%%%%%%%%%% a bit nicer hypelinks %%%%%%%%%%%%%%%%%%%%%%%%%%%%%

\begin{document}

\def\ket#1{|#1\rangle}
\def\bra#1{\langle#1|}
\newcommand{\ketbra}[2]{|#1\rangle\!\langle#2|}
\newcommand{\braket}[2]{\langle#1|#2\rangle}
\newcommand{\prob}[1]{{\rm Pr}\left(#1 \right)}
\newcommand{\expect}[2]{{\mathbb{E}_{#2}}\!\left\{#1 \right\}}
\newcommand{\var}[2]{{\mathbb{V}_{#2}}\!\left\{#1 \right\}}

% fix: supported only for revtex
%\newcommand{\openone}{\mathbb{I}}
% fix: unsupported with iopart
%\newcommand{\eqref}[1]{(\ref{#1})}

\newcommand{\sde}{\mathrm{sde}}
\newcommand{\Z}{\mathbb{Z}}
\newcommand{\w}{\omega}
\newcommand{\K}{\kappa}

\newcommand{\Tchar}{$T$}
\newcommand{\T}{\Tchar~}
\newcommand{\ClT}{\{{\rm Clifford}, \Tchar\}~}
\newcommand{\Tcount}{\Tchar--count~}
\newcommand{\Tcountper}{\Tchar--count}
\newcommand{\Tcounts}{\Tchar--counts~}
\newcommand{\Tdepth}{\Tchar--depth~}
\newcommand{\Zr}{\Z[i,1/\sqrt{2}]}
\newcommand{\ve}{\varepsilon}

\newcommand{\eq}[1]{\hyperref[eq:#1]{(\ref*{eq:#1})}}
\renewcommand{\sec}[1]{\hyperref[sec:#1]{Section~\ref*{sec:#1}}}
\newcommand{\app}[1]{\hyperref[app:#1]{\ref*{app:#1}}}
\newcommand{\fig}[1]{\hyperref[fig:#1]{Figure~\ref*{fig:#1}}}
\newcommand{\thm}[1]{\hyperref[thm:#1]{Theorem~\ref*{thm:#1}}}
\newcommand{\lem}[1]{\hyperref[lem:#1]{Lemma~\ref*{lem:#1}}}
\newcommand{\tab}[1]{\hyperref[tab:#1]{Table~\ref*{tab:#1}}}
\newcommand{\cor}[1]{\hyperref[cor:#1]{Corollary~\ref*{cor:#1}}}
\newcommand{\defn}[1]{\hyperref[def:#1]{Definition~\ref*{def:#1}}}

\newenvironment{proofof}[1]{\begin{trivlist}\item[]{\flushleft\it
Proof of~#1.}}
{\qed\end{trivlist}}

\newcommand{\cu}[1]{{\textcolor{red}{#1}}}
\newcommand{\tout}[1]{{}}
\newcommand{\beq}{\begin{equation}}
\newcommand{\eeq}{\end{equation}}
\newcommand{\beqa}{\begin{eqnarray}}
\newcommand{\eeqa}{\end{eqnarray}}

\newcommand{\id}{\openone}
\title{Floating Point Representations in Quantum Circuit Synthesis}
\author{Nathan Wiebe$^{1,2}$, Vadym Kliuchnikov$^{1,3}$}
\address{$^1$ Institute for Quantum Computing, 200 University Ave West, Waterloo, ON, Canada}
\address{$^2$ Department of Combinatics \& Opt., University of Waterloo, Waterloo, ON, Canada}
\address{$^3$ Department of Computer Science, University of Waterloo, Waterloo, ON, Canada}

\begin{abstract}
We provide a non--deterministic quantum protocol that approximates the single qubit rotations $R_x(2\phi_1^2 \phi_2^2)$ using $R_x(2\phi_1)$ and $R_x(2\phi_2)$ and a constant number of Clifford and \T operations.  We then use this method to construct a ``floating point'' implementation of a small rotation wherein we use the aforementioned method to construct the exponent part of the rotation and also to combine it with a mantissa.  This causes the cost of the synthesis to depend more strongly on the relative (rather than absolute) precision required.  We analyze the mean and variance of the \Tcount required to use our techniques and provide new lower bounds for the $T$--count for ancilla free synthesis of small single--qubit axial rotations.  We further show that our techniques can use ancillas to beat these lower bounds with high probability.  We also discuss the \Tdepth of our method and see that the vast majority of the cost of the resultant circuits can be shifted to parallel computation paths.
\end{abstract}
\maketitle

\section{Introduction}

The ability to implement very small rotations is vitally important to quantum computation.  The ability to economically implement small rotations is essential for the quantum Fourier transform, which is an essential part of
Shor's factoring algorithm.  In quantum
computer simulations of local Hamiltonians (which encompasses simulations of quantum chemistry in second quantized form), the time evolution operator is simulated by breaking up the evolution time into a sequence of short time evolutions using Trotter--Suzuki formulas~\cite{lloyd,LA97, WML+10,KW+11,RWS12}.  The implementation of each timestep requires performing a single qubit $Z$--rotation through a very small angle.  In practice, rotations of $10^{-3}$ radians or smaller may be needed in order to ensure that upper bounds on the simulation error are appropriately small~\cite{RWS12}.  As the error tolerance shrinks for the simulation, these rotation angles must shrink as well.  This issue is problematic because existing algorithms for designing fault tolerant circuits to implement these small angle rotations can be very costly, both in the number of gates required and the classical computational time
required to find the appropriate gate sequences~\cite{CMT+09, UGS13}.

The Solovay--Kitaev theorem~\cite{DN06} is often used to estimate the cost of synthesizing the rotation gates using a finite gate library at cost polylogarithmic in the error tolerance.  Although polylogarithmic, the cost of performing gate synthesis using the Solovay--Kitaev theorem is polynomially greater than the lower bound of logarithmic scaling~\cite{HRC02}.
In recent months, great progress has been made to reduce the cost of synthesizing single qubit unitaries, and now methods for synthesizing these rotations have been proposed that are
polynomially more efficient than the Solovay--Kitaev theorem~\cite{KMM12,Sel12,vbasis}.  Another novel approach that has recently been proposed uses non--deterministic algorithms that consume pre--programmed ancilla states to perform these rotations~\cite{CJ12,DS12,CL13,fstdist}, rather than utilizing a complicated circuit synthesis method.  A major advantage of these ancilla assisted synthesis methods is that the resource states can be prepared before the algorithm is executed, substantially reducing the depth of the circuit and making the result more resilient to circuit failure; furthermore, any leftover states can also be used as resources in subsequent runs or even other quantum algorithms.  Such methods may be preferable to using traditional circuit synthesis methods in parallel quantum computation where
fast classical feed forward is available~\cite{CJ12}.

%\section{Floating Point Representations for $X$--Rotations}
Our key innovation  is a quantum protocol that refines large $X$--rotations into smaller rotations.  In particular, given the ability to enact the rotations $R_x(2\phi_1)$ and $R_x(2\phi_2)$, our method provides a way to implement a rotation that is approximately $R_x(2\phi_1^2 \phi_2^2)$ if $\phi_1\phi_2\ll 1$.
We further show that, with high probability, this approach generates small single qubit rotations more efficiently than the best possible ancilla--free circuit synthesis method (using the \ClT gate library).  This is significant because it
not only shows that ancillas are a powerful resource for single qubit circuit synthesis, but also because it allows much more sophisticated computations to be performed on a rudimentary quantum computer.

This ability to generate small rotations and multiply the rotation angles of two
operations naturally opens the possibility of employing a ``floating point'' implementation of the rotation.
A floating point number is broken up into two parts: the mantissa and the exponent.  Both the mantissa and exponents are encoded as integers and they represent a number $\phi$ as $\phi=m \times 10^{e}$ where $m$ is the mantissa and $e$ is the exponent.  A major advantage of this representation is that extraneous digits of precision are not used to represent very small, or very large, numbers.  Our non--deterministic circuit can then be used to construct $e^{-i\phi X}$ by combining a mantissa unitary $U_m$ and an exponent unitary $U_e$.  For example, if $\phi\ll1$ then we could combine $U_m=e^{-i\sqrt{m} X}$ and $U_e=e^{-i (10^{e/2}) X}$ to approximate $e^{-i\phi X}$. We make this intuition precise in \sec{floatingpoint}.  We must emphasize that our approach does not conform to standard implementations of floating point arithmetic (such as IEEE 754) nor is either base $2$ or $10$  the natural base for the exponent in our synthesis technique; nonetheless, the approach is strongly analogous to floating point arithmetic.

Similar to floating point arithmetic, a major motivation for the use of floating point synthesis is that its cost depends more strongly on the relative precision needed for the rotation rather than the absolute precision, unlike traditional circuit synthesis techniques.  This is especially significant for quantum simulation because it is common for small rotations to appear that do not need to be implemented with high relative precision in such applications.  A further benefit of our approach is that the vast majority of the cost involves preparing resource states that are then consumed to perform the desired rotation.  These preparations can be performed offline and in parallel, which allows much of the cost to be shifted to parallel computational paths.  Finally, the approximant yielded by our method is \emph{precisely} an axial rotation meaning that the rotation yielded is of exactly the same form as the desired rotation.

%Our non--deterministic circuit can then be used to construct $^{-i\phi X}$ by
%.  We then use the same non--deterministic circuit recursively to synthesize the exponent part of the floating point of the rotation angle and a more traditional synthesis method to implement the mantissa part of the rotation.  This provides a much more economical method for synthesizing small rotation angles.

%We implement the floating point representation by breaking up the $X$--rotation into two components: a mantissa unitary $U_m$ and an exponent unitary $U_e$.  We then use a non--deterministic circuit that will approximately multiply the rotation angles generated by $U_e$ and $U_m$ in quadrature, with high probability; furthermore, if the circuit fails then the error introduced to the target state of the rotation can be corrected using Clifford operations.  This means that our protocol will never fail; rather, the cost of implementing the rotation will follow a distribution whose form will depend on the particulars of the method used to implement $U_e$.
%In particular, if the desired rotation angle $\phi=\alpha \times 10^{-\gamma}$ then we choose $U_m$ to have off--diagonal elements approximately equal in magnitude to $\sqrt{\alpha}$ and $U_e$ to have off--diagonal elements with magnitude approximately equal to $10^{-\gamma/2}$.  Similarly

%Although both $U_m$ and $U_e$ can be constructed directly using our circuits, our circuits are best suited for implementing $U_e$ whereas existing methods may be more efficient for implementing $U_m$.
Our paper is laid out as follows.  We introduce our non--deterministic circuit in \sec{gearbox} and show how to use it  recursively to generate small rotations in \sec{compgearbox} and compute the mean and the variance of the number of \T gates required to execute our circuits.  We then combine these ideas in \sec{floatingpoint} to produce the floating point representation of the desired rotation.  \sec{example} gives an example of floating point synthesis that shows that  it substantially reduces the number of \T gates needed to approximate the rotation $\exp(-i\pi Z/2^{16})\approx\exp(-i4.7937\times 10^{-5} Z)$ relative to \emph{optimal} ancilla--free synthesis.
  We finally show in~\sec{optimal} that our method for generating small single qubit rotations is more efficient than optimal circuit synthesis methods that are constrained to only use single qubit Clifford and \T gates and provide an explicit construction for this optimal synthesis method.

{
\begin{figure}[t]

\hspace{-1.5cm}
\begin{minipage}[t]{0.45\linewidth}
\centering
\[
\newcommand{\up}[1]{\push{\raisebox{6pt}{$#1$}}}
 \Qcircuit @C=0.7em @R=0.7em {
\lstick{\ket{0}}&\qw&\gate{U^1}&\ctrl{1}&\gate{U^{1\dagger}}&\qw&\meter\\
\lstick{\ket{0}}&\qw&\gate{U^2}&\ctrl{1}&\gate{U^{2\dagger}}&\qw&\meter\\
&&&\up{\vdots} \qwx[2] &&&\\
\lstick{\ket{0}}&\qw&\gate{U^d}&\ctrl{1}&\gate{U^{d\dagger}}&\qw&\meter\\
\lstick{\ket{\psi}}&\qw&\qw&\gate{-iX}&\qw&\qw&\qw\\
}
\]
\caption{Gearbox circuit $C^{(d)}(U^1,\ldots,U^d)$, which implements a small rotation on the input state $\ket{\psi}$ given that each measurement outcome is $0$.\label{fig:gearbox}}
%\end{figure}
\end{minipage}
\hspace{.35cm}
\begin{minipage}[t]{0.45\linewidth}
%\begin{figure}[t]
\centering
\[
\newcommand{\up}[1]{\push{\raisebox{6pt}{$#1$}}}
 \Qcircuit @C=0.7em @R=0.7em {
\lstick{\ket{0}}&\qw&\gate{U_m}&\ctrl{1}&\gate{U_m^{\dagger}}&\qw&\meter\\
\lstick{\ket{0}}&\qw&\gate{U_e}&\ctrl{1}&\gate{U_e^{\dagger}}&\qw&\meter\\
\lstick{\ket{\psi}}&\qw&\qw&\gate{-iX}&\qw&\qw&\qw\\}
\]
\vspace{1.58cm}
\caption{Circuit for multiplying mantissa rotation $U_m$ with exponent rotation $U_e$.  This circuit is a special case of that in ~\fig{gearbox} for the case where $d=2$.\label{fig:expcircuit}}
\end{minipage}

\end{figure}
}

\section{The Gearbox Circuit}\label{sec:gearbox}
% --------------------------

The ``gearbox circuit'' is the central object that underlies our entire method.   The role of the circuit is to perform a rotation through an angle that is the product of the squares of the off--diagonal matrix elements of a series of single qubit unitary operations $U^1,\ldots,U^d$ acting on ancilla qubits.  We refer to this circuit as a gearbox circuit because it transforms coarse rotations into much finer rotations in analogy to a gearbox.  The circuit is denoted, in the case of $d$ control qubits, as $C^{(d)}(U^1,\ldots,U^d)$ and is given  in \fig{gearbox}.  The circuit is equivalent to those used in~\cite{CGM+09,CW12} to implement linear combinations of unitary operations in the case where one of the unitary operations is the identity.  %From this perspective, the gearbox circuit performs a weighted sum of the identity and a given $X$--rotation to achieve a smaller rotation.  As an example,
%\begin{equation}
%\fl \frac{\openone +e^{-i\phi X}}{2}  = \frac{(1+\cos(\phi))\openone -i\sin(\phi)X}{2}\propto e^{-i\tan^{-1}(\sin(\phi)/(1+\cos(\phi)))X},
%\end{equation}
%which approximately halves the rotation angle if $\phi$ is small.

We use the circuit for three purposes: to multiply the rotation angles generated by $U_m$ and $U_e$ (see~\fig{expcircuit}), to reduce the spacing between the rotations that our circuits produce and to generate $U_e$.  
Our cost analysis assumes that Clifford operations ($H,S$ and CNOT) are inexpensive whereas the non--Clifford operation \T is expensive.  This cost model is motivated by the fact that {\T}gates
 are very expensive to perform  in many error correcting codes because multiple rounds of magic state distillation may be required to obtain sufficiently accurate \T gates~\cite{FSG09}.
The following theorem shows that the gearbox circuit can be used to convert modestly small rotations into very small rotations non--deterministically, and furthermore that the gearbox circuit can always be repeated until success is achieved because the rotation implemented when the circuit fails to give the desired rotation can be inverted using Clifford operations, which we assume are inexpensive.
%Now that we have established the intuition behind our approach we will discuss the application of the gearbox circuit to multiplying the rotation angles of input unitary operators.  This result is given in the following theorem.
\begin{theorem}
Given that each measurement in $C^{(d)}(U^1,\ldots,U^d)$ yields $0$, the circuit enacts the transformation $C^{(d)}(U^1,\ldots,U^d): \ket{0^{\otimes d}}\ket{\psi}\mapsto e^{-iX\tan^{-1}(\tan^2(\theta))}\ket{\psi}$, where $\sin^2(\theta)= |U^1_{1,0}|^2\cdots|U^d_{1,0}|^{2}$.  This outcome occurs with probability $\cos^4(\theta)+\sin^4(\theta)$ and all other measurement outcomes result in the transformation $\ket{\psi}\rightarrow e^{i\pi X/4}\ket{\psi}$, regardless of the choice of $U^1,\ldots,U^d$.\label{thm:smallrot}
\end{theorem}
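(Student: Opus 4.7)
My plan is to track the joint state of the $d$ ancilla qubits plus target qubit through the circuit, extract the amplitude on the all--zeros ancilla outcome to obtain the successful branch, and then show that whatever survives on the orthogonal complement factors as a tensor product whose target component is independent of $U^1,\ldots,U^d$.

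First I would expand the post-$U^k$ state as $\bigotimes_{k=1}^d (U^k_{0,0}\ket{0}+U^k_{1,0}\ket{1})\otimes\ket{\psi}$ and write the $d$-controlled $-iX$ as $I + \ketbra{1^{\otimes d}}{1^{\otimes d}}\otimes(-iX-I)$. Setting $\alpha := \prod_k U^k_{1,0}$ so that $|\alpha|^2 = \sin^2\theta$, the state after the controlled gate becomes $\bigotimes_k U^k\ket{0}\otimes\ket{\psi} + \alpha\ket{1^{\otimes d}}\otimes(-iX-I)\ket{\psi}$. Applying $U^{k\dagger}$ to each ancilla collapses the first term back to $\ket{0^{\otimes d}}\ket{\psi}$ while mapping the second term to $\alpha\bigl(\bigotimes_k U^{k\dagger}\bigr)\ket{1^{\otimes d}}\otimes(-iX-I)\ket{\psi}$. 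Since $\bra{0^{\otimes d}}\bigl(\bigotimes_k U^{k\dagger}\bigr)\ket{1^{\otimes d}} = \bar\alpha$, projecting the ancillas onto $\ket{0^{\otimes d}}$ yields the unnormalized target state $(1-|\alpha|^2)\ket{\psi} + |\alpha|^2(-iX)\ket{\psi} = (\cos^2\theta\,I - i\sin^2\theta\,X)\ket{\psi}$.

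Its squared norm is $\cos^4\theta+\sin^4\theta$, which is both the success probability and the normalization constant. Using the identity $\cos\phi\,I - i\sin\phi\,X = e^{-iX\phi}$ with $\tan\phi=\tan^2\theta$ (equivalently $\phi = \tan^{-1}(\tan^2\theta)$), the normalized successful branch rewrites as $e^{-iX\tan^{-1}(\tan^2\theta)}\ket{\psi}$, matching the claim.

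For the failure branch, the projector onto the complement of $\ket{0^{\otimes d}}$ annihilates the first term and acts only on the ancilla factor of the second term; crucially, that second term is a tensor product, so every nonzero measurement outcome leaves the target in a fixed state proportional to $(-iX-I)\ket{\psi}$. The Clifford identity $I+iX = \sqrt{2}\,e^{i\pi X/4}$ gives $-iX-I = -\sqrt{2}\,e^{i\pi X/4}$, so the target state is $e^{i\pi X/4}\ket{\psi}$ up to a global phase, independently of the $U^k$'s and of which specific nonzero outcome was observed. The one step I would be most careful about is this last one: one must verify that the post-measurement ancilla vector really does factor off cleanly, which works precisely because the $(-iX-I)$ factor on the target came from a single term in the decomposition of the controlled gate and has no entanglement with the ancilla register beyond the common scalar $\alpha$.
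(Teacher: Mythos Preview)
Your proposal is correct and follows essentially the same argument as the paper: both track the state through the three stages, isolate the $\ket{0^{\otimes d}}$ component to obtain $(\cos^2\theta\,\openone - i\sin^2\theta\,X)\ket{\psi}$, and observe that every other branch carries the fixed target factor $(\openone+iX)/\sqrt{2}=e^{i\pi X/4}$. Your use of the decomposition $\openone + \ketbra{1^{\otimes d}}{1^{\otimes d}}\otimes(-iX-\openone)$ for the controlled gate is a slightly tidier way to organize the same computation the paper carries out by explicit index expansion.
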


\begin{proof}
There are three steps in the circuit, first $U_1$ through $U_d$ are performed on the ancilla qubits, then the $d$--controlled $-iX$ gate is applied and finally $U_1^\dagger$ through $U_d^\dagger$ are applied to the ancillas.  By applying these operators and expanding the matrix products that arise we find that
$C^{(d)}(U^1,\ldots, U^d)$ performs:
\begin{eqnarray}
{\ket{0^{\otimes d}}\ket{\psi}}&\rightarrow U^1\otimes\cdots\otimes U^d\ket{0^{\otimes d}}\ket{\psi}\nonumber\\
&\rightarrow U^1\otimes\cdots\otimes U^d\ket{0^{\otimes d}}\ket{\psi}-(U^1_{1,0}\cdots U^d_{1,0}\ket{1^{\otimes d}})(\openone+iX)\ket{\psi}\nonumber\\
&\rightarrow \ket{0^{\otimes d}}\ket{\psi}-\sum_j U^{1*}_{1,j_1}\cdots U^{d*}_{1,j_d} U^1_{1,0}\cdots U^d_{1,0}\ket{j}(\openone+iX)\ket{\psi}\nonumber\\
&= \left((1-|U^1_{1,0}|^2\cdots |U^d_{1,0}|^2) \openone -i|U^1_{1,0}|^2\cdots |U^d_{1,0}|^2X\right)\ket{0^{\otimes d}}\ket{\psi}\nonumber\\
&\qquad-\sum_{j\ne 0} U^{1*}_{1,j_1}\cdots U^{d*}_{1,j_d} U^1_{1,0}\cdots U^d_{1,0}\ket{j}(\openone+iX)\ket{\psi}\nonumber\\
&= \sqrt{\cos^4(\theta)+\sin^4(\theta)}\ket{0^{\otimes d}}\left(\frac{\cos^2(\theta) \openone -i\sin^2(\theta)X}{\sqrt{\cos^4(\theta)+\sin^4(\theta)}}\right)\ket{\psi}\nonumber\\
&\qquad-\sqrt{2}\sum_{j\ne 0} U^{1*}_{1,j_1}\cdots U^{d*}_{1,j_d} U^1_{1,0}\cdots U^d_{1,0}\ket{j}\left(\frac{\openone+iX}{\sqrt{2}}\right)\ket{\psi}.\label{eq:lem1proof}
\end{eqnarray}
It follows from trigonometry and the identity $e^{-i\phi X}=\cos(\phi)\openone -i\sin(\phi)X$ that~\eq{lem1proof} implies that the transformation $\ket{\psi}\rightarrow e^{-i\tan^{-1}(\tan^2(\theta))X}$ will be implemented by $C^{(d)}(U^1,\ldots,U^d)$ with probability $\cos^4(\theta)+\sin^4(\theta)$, and that the circuit implements the transformation $\ket{\psi}\rightarrow e^{i\pi X/4}\ket{\psi}$ in all other cases, as claimed.
\end{proof}

\begin{figure}[t]
%\begin{indented}
 \includegraphics[width=0.6\textwidth]{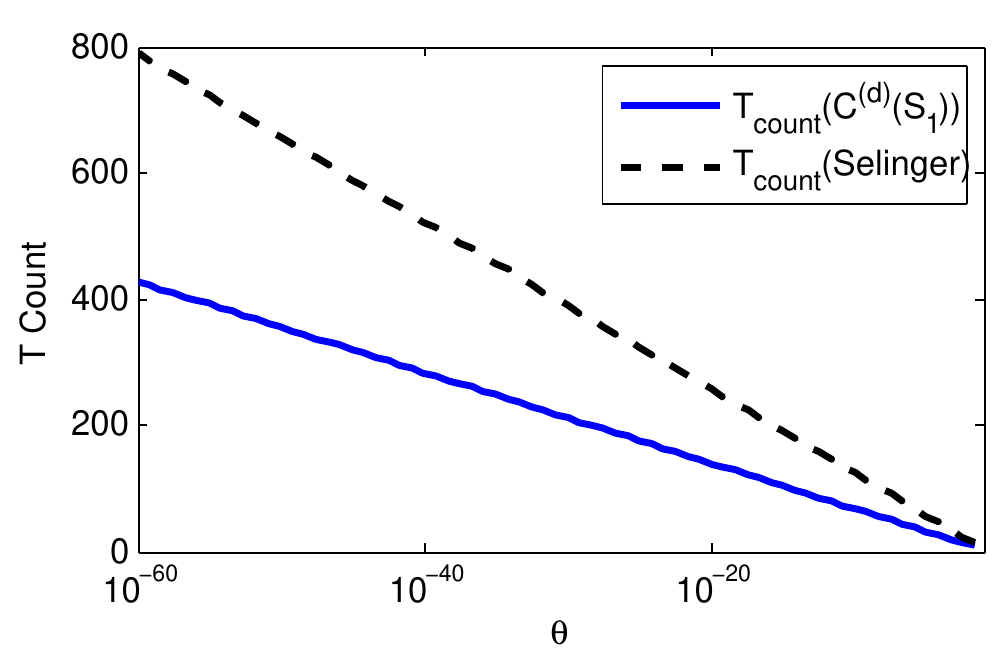}
%\end{indented}
\caption{Here we compare the mean \Tcount  for implementing $C^{(d)}(S_1)$, estimated using $500$ samples per angle and observe that gearbox circuits can be more efficient at synthesizing small rotations than Selinger's method.\label{fig:compare}}
\end{figure}

The \Tcount required to produce a rotation angle $\theta \approx |U_{1,0}^1|^2\cdots |U^d_{1,0}|^2$, given that each measurement outcome is $0$ and the simplified Tofolli circuit of~\cite{CJ13} is used to implement the $d$--controlled $-iX$ gate, is
\begin{equation}
T_{\rm count}(C^{(d)}(U^1,\ldots,U^d))= 4(d-1)+2\sum_{\ell=1}^d T_{\rm count}(U^\ell).\label{eq:tcount}
\end{equation}
Similarly, using the Toffoli construction of~\cite{CJ13} (depth two constructions that do not use measurement can be found in~\cite{AMM+12}) yields a \Tdepth of 
\begin{equation}
T_{\rm depth}(C^{(d)}(U^1,\ldots,U^d))= (d-1)+2\max_{\ell} T_{\rm depth}(U^\ell).\label{eq:tcount}
\end{equation}
Any failures that occur in implementing $C^{(d)}(S_\ell)$ can be corrected by applying Clifford operations and attempting the rotation again because $e^{iX\pi/4}$ is itself a Clifford operation, up to a global phase.  These estimates of the \Tcount and \Tdepth also approximately hold in cases where the rotation is attempted until success is obtained because \thm{smallrot} predicts that the failure probability will be very small if $\theta\ll 1$.  It is also interesting to note that similar circuits to our gearbox circuit have been proposed for implementing $V$--basis rotation~\cite{NC00}, suggesting that this template may be useful for a variety of tasks in quantum circuit synthesis.
%\begin{equation}
%T_{\rm depth}(C^{(d)}(S_j))= 14(d-1)+2\sum_{j=1}^d T_{\rm count}(U^j).\label{eq:tcount}
%\end{equation}
%If ancillas are provided that contain $U^j \ket{0}$ then the circuit can be implemented using $14(d-1)+\sum_$ \T--gates.  \thm{smallrot} implies that the probability of failing to execute the gearbox circuit will typically be negligibly small if $d$ is large.  We therefore anticipate that the $T_{\rm count}$ will be given by~\eq{tcount}.

\fig{compare} shows that the number of operations needed to synthesize a rotation of angle $\theta$ using $C^{(d)}(S_1)$  as a function of the rotation angle generated $\theta(d)$,
where in general $S_j$ is a unitary that yields the minimum value of $|{S_j}_{1,0}|$ over all $H$, \T circuits consisting of at most
$j$ $T$ gates and hence $S_1=HTH$.  The non--deterministic circuit manages to outperform a lower bound proven by Selinger~\cite{Sel12} for the number of \T gates needed to synthesize an arbitrary $Z$--rotation using the \ClT library and no ancilla qubits.
In fact, the \Tcount is smaller than that required for Selinger's method not just for very small rotations but also for the largest angles achievable using $C^{(d)}(S_1)$, which are on the order of $10^{-2}$ radians.
These results are significant because Selinger's circuit synthesis method is known to be optimal, meaning that there exist $Z$--rotations that require a number of \T gates that saturate the scaling predicted by the Selinger's method.
We will see in~\sec{optimal} that our non--deterministic circuits can in fact surpass the efficiency of any single qubit circuit synthesis method that uses our gate library and does not employ ancillary qubits.

It may be natural to suspect that the efficiency with which small angle rotations can be synthesized increases as $|U_{1,0}|$ decreases.  We find that using longer circuits to synthesize unitaries with smaller values of $|U_{1,0}|$ does not necessarily yield a more efficient method for generating small rotations.
\fig{slope} contains results found by fitting the \Tcount for $C^{(d)}(S_j)$ to a logarithmic function of the form $a(j)\log_2(\theta^{-1}) +b(j)$.  Using values of $j$ ranging from $1$ to $59$, we find strong evidence that $a(j)\approx 2$ is possible with this method.  This is superior to the method of Selinger, which gives $a(j)=4$ and we will show later in \fig{optslope} that this is also smaller than the optimal value of $a(j)\approx 3$  that arises using ancilla--free circuit synthesis using the gate library $\ClT\!\!\!$.  It should be noted, however, that $C^{(d)}(S_j)$ does not necessarily provide as fine control over the resultant rotation angle as these other circuit synthesis methods (especially for large $j$); although, our results show that it is more efficient at generating small rotation angles than the optimal ancilla free circuit synthesis method.

\begin{figure}[t]
%\begin{indented}
 \includegraphics[width=0.8\textwidth]{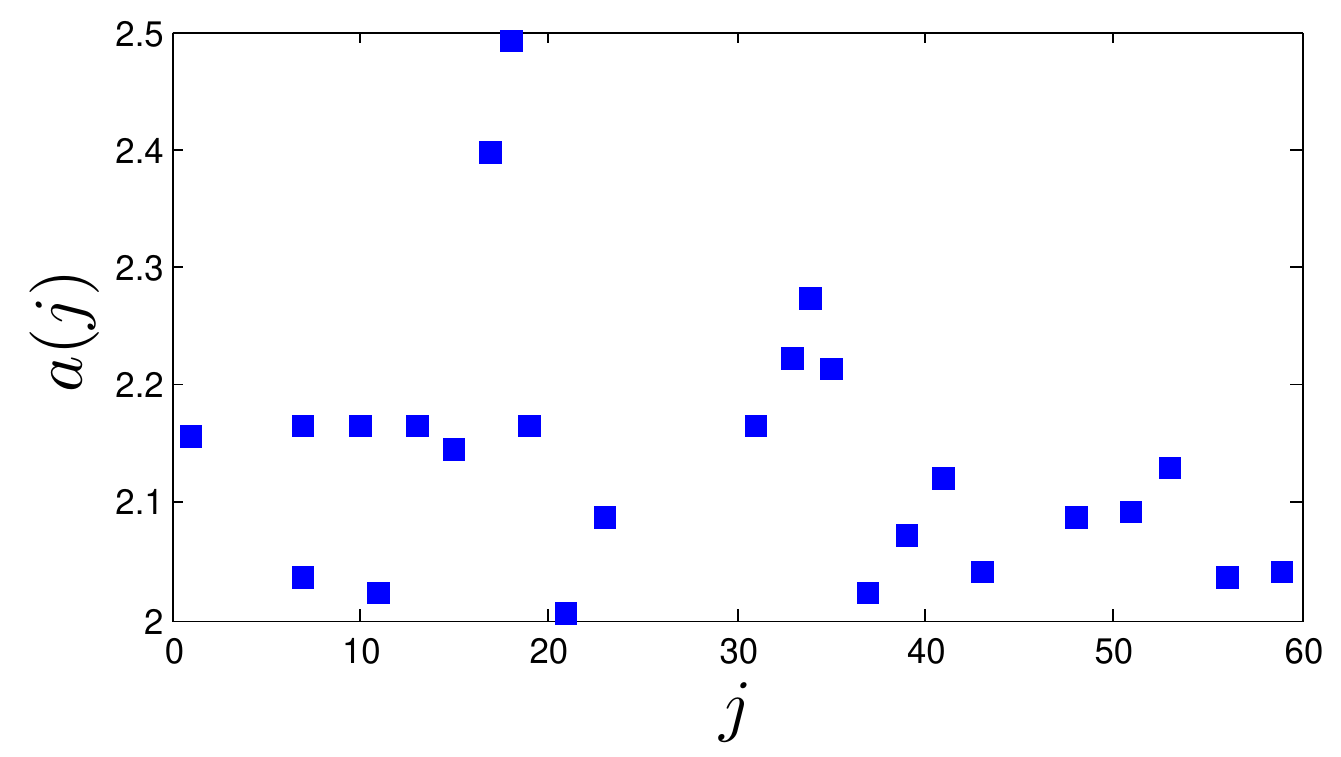}
%\end{indented}
\caption{Here we plot the fit parameter $a(j)$ for a least squares fit of the \Tcount as a function of $j$ for $C^{(d)}(S_j)$ to $a(j)\log_2(1/\theta(d))+b(j)$ for different values of $j$ where $\theta(d)$ is the rotation angle generated by $C^{(d)}(S_j)$ and $d$ is varied from $1$ to $128$ to estimate $a(j)$.  We see evidence from the data that the efficiency of generating a small angle rotation using $C^{(d)}(S_j)$ increases at first as a function of $j$ and then saturates. \label{fig:slope}}
\end{figure}

%------------------------------------------------
\section{The Composed Gearbox Circuit}\label{sec:compgearbox}
%------------------------------------------------

\fig{slope} shows that a direct application of the gearbox circuit requires a \Tcount that scales at least as $2\log_2(1/\theta)$, implying that a different approach is needed to further improve the scaling.
A natural way to improve on the prior method is to use the gearbox circuit recursively by taking $U$ to be the rotation yielded by another gearbox circuit.    This process can be repeated many times and the resulting circuit forms a tree--like structure as seen in \fig{recursivecircuit}.  We formally define the recursive construction of the ``composed gearbox circuit'' below.

\begin{definition}
Let $C^{\circ 1}(U)$ for $U\in U(2)$ be the circuit formed by taking $U^1=U$ in $C^{(1)}$, then for
for any integer $d>1$, $C^{\circ d}(U):=C^{\circ 1}(C^{\circ d-1}(U))$.\label{def:recursive}
\end{definition}
%The resulting construction forms a tree--like structure, which is illustrated in \fig{recursivecircuit} for the case where $d=3$.  Unlike the previous method, $C^{\circ d}(U)$ will not necessarily succeed with high probability.  This is because there are $\exp(O(d))$ measurements that each have constant failure probability.  Therefore, a single attempt at implementing this circuit will not succeed with high probability in the limit of large $d$.  We solve this problem by detecting and correcting errors as they occur during the implementation of $C^{\circ d}(U)$.

We then show in the following corollary that $C^{\circ d}(U)$ generates a rotation angle that scales as $\tan^{2^d}(\theta_0)$ in the limit of small $\theta$ (where $\sin^2(\theta_0)=|U_{1,0}|^2$).
\begin{corollary}\label{cor:successprob}
If each of the measurements in $C^{\circ d}(U)$ yield ``0'' then $C^{\circ d}: \ket{\psi}\rightarrow e^{-i\tan^{-1}(\tan^{2^{d}}(\theta_0))X}\ket{\psi}$ where $\sin(\theta_0)^2=|U_{1,0}|^2$.
%furthermore, the probability that the measurements are successful  is at least $$(1-2\tan(\theta_0)/(1-\tan(\theta_0)^2))^{2^{d-1}}.$$
%The probability of enacting $e^{-i\tan^{-1}(\tan^{2^{d+1}}(\theta_0))X}$, given that all previous measurements are successful, is $$P=\frac{1+\tan^{2^{d+2}}(\theta_0)}{(1+\tan^{2^{d+1}}(\theta_0))^2}.$$
\end{corollary}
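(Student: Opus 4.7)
The plan is to prove the corollary by induction on $d$, using \thm{smallrot} as the workhorse at each stage.

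For the base case $d=1$, the claim is immediate: by \defn{recursive}, $C^{\circ 1}(U)=C^{(1)}(U)$, and \thm{smallrot} with $d=1$ asserts that conditional on the single measurement yielding $0$, this circuit performs $e^{-iX\tan^{-1}(\tan^2(\theta_0))}$ where $\sin^2(\theta_0)=|U_{1,0}|^2$. This matches the target expression since $\tan^{2^1}(\theta_0)=\tan^2(\theta_0)$.

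For the inductive step, I would assume that $C^{\circ d-1}(U)$, conditional on all its measurements returning $0$, implements the rotation $V:=e^{-i\phi X}$ with $\phi=\tan^{-1}(\tan^{2^{d-1}}(\theta_0))$, and then observe that by \defn{recursive}, $C^{\circ d}(U)=C^{\circ 1}(V)=C^{(1)}(V)$. The key computation is to identify the effective angle $\theta_1$ for the outer gearbox: since $V_{1,0}=-i\sin(\phi)$, we have $|V_{1,0}|^2=\sin^2(\phi)$, so setting $\sin^2(\theta_1)=|V_{1,0}|^2$ we may take $\theta_1=\phi=\tan^{-1}(\tan^{2^{d-1}}(\theta_0))$, which gives $\tan(\theta_1)=\tan^{2^{d-1}}(\theta_0)$. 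Applying \thm{smallrot} to the outer circuit (conditional on its measurement yielding $0$) then yields $e^{-iX\tan^{-1}(\tan^2(\theta_1))}$, and substituting $\tan^2(\theta_1)=\tan^{2\cdot 2^{d-1}}(\theta_0)=\tan^{2^d}(\theta_0)$ completes the inductive step.

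There is no real obstacle here: the result is essentially a bookkeeping exercise once \thm{smallrot} is in hand. The only subtlety worth flagging is conditioning correctness: the statement ``all measurements yield $0$'' in $C^{\circ d}(U)$ decomposes into ``all measurements yield $0$ in the inner $C^{\circ d-1}(U)$'' (which validates the inductive hypothesis and produces a clean unitary $V$) together with ``the outer measurement yields $0$'' (which validates the single application of \thm{smallrot} to $V$). Because the inner block is conditioned to produce a deterministic unitary on the target, it can legitimately be fed into the outer $C^{(1)}$, so no interference between the two conditioning events arises.
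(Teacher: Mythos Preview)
Your proposal is correct and follows essentially the same approach as the paper: an induction on $d$ with \thm{smallrot} supplying both the base case and the inductive step, via the identification $\tan(\theta_1)=\tan^{2^{d-1}}(\theta_0)$ so that $\tan^2(\theta_1)=\tan^{2^d}(\theta_0)$. Your explicit remark on how the conditioning on all-zero outcomes decomposes across the inner and outer circuits is a worthwhile clarification that the paper leaves implicit.
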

\begin{proof}
We will first prove using induction that $C^{\circ d}(U)$ yields the transformation $e^{-i\tan^{-1}(\tan^{2^d}(\theta_0))X}$, given that the outcome of each measurement in the tree is $0$ and then use \thm{smallrot} to verify the claimed success probability.
The base case for our inductive proof, $C^{\circ 1}(U)$, has already been demonstrated by \thm{smallrot} for the case where $d=0$.  Now let us assume that $C^{\circ d-1}(U)$ enacts $e^{-i\tan^{-1}(\tan^{2^{d-1}}(\theta_0))X}$.  The off--diagonal matrix elements of this matrix have magnitude $|\sin(\tan^{-1}(\tan^{2^{d-1}}(\theta_0)))|$ and hence it follows from \thm{smallrot} that $C^{\circ 1}(C^{\circ d-1}(U))$ enacts, upon success,
\begin{equation}
e^{-i\tan^{-1}(\tan^{2\cdot 2^{d-1}}(\theta_0))X}=e^{-i\tan^{-1}(\tan^{2^{d}}(\theta_0))X},
\end{equation}
as claimed.  %The claimed success probability then follows by using trigonometric identities to simplify $P=\cos^4(\theta)+\sin^4(\theta)$ for $\theta=\tan^{-1}(\tan^{2^{d}}(\theta_0))$.
\end{proof}

\begin{figure}[t!]
\label{fig:recursivecircuit}
\[
\hspace{-2cm}
\tiny
\newcommand{\up}[1]{\push{\raisebox{6pt}{$#1$}}}
\Qcircuit @C=0.4em @R=0.4em {
\lstick{\ket{0}}&\qw&\gate{U}&\ctrl{1}&\gate{U^\dagger}&\qw&\meter&&\lstick{\ket{0}}&\qw&\gate{U}&\ctrl{1}&\gate{U^\dagger}&\qw&\meter&&\lstick{\ket{0}}&\qw&\gate{U}&\ctrl{1}&\gate{U^\dagger}&\qw&\meter&&&\lstick{\ket{0}}&\qw&\gate{U}&\ctrl{1}&\gate{{U}^\dagger}&\qw&\meter&&&\\
\lstick{\ket{0}}&\qw&\qw&\gate{-iX}&\qw&\qw&\qw&\ctrl{1}&\qw&\qw&\qw&\gate{iX}&\qw&\qw&\meter&&\lstick{\ket{0}}&\qw&\qw&\gate{-iX}&\qw&\qw&\qw&\ctrl{1}&\qw&\qw&\qw&\qw&\gate{iX}&\qw&\qw&\meter\\
\lstick{\ket{0}}&\qw&\qw&\qw&\qw&\qw&\qw&\gate{-iX}&\qw&\qw&\qw&\qw&\qw&\qw&\qw&\qw&\ctrl{1}&\qw&\qw&\qw&\qw&\qw&\qw&\gate{iX}&\qw&\qw&\qw&\qw&\qw&\qw&\qw&\meter\\
\lstick{\ket{\psi}}&\qw&\qw&\qw&\qw&\qw&\qw&\qw&\qw&\qw&\qw&\qw&\qw&\qw&\qw&\qw&\gate{-iX}&\qw&\qw&\qw&\qw&\qw&\qw&\qw&\qw&\qw&\qw&\qw&\qw&\qw&\qw&\qw&\qw&\\
}
\]
\caption{A circuit expansion of $C^{\circ 3}(U)$.  Note that every non--Clifford operation except the right most $U^\dagger$ can be implemented using ancillas containing $C^{\circ 2}(U)\ket{0}$, $C^{\circ 1}(U)\ket{0}$ and $U\ket{0}$.  \label{fig:recursivecircuit}}
\end{figure}

One of the most remarkable features of $C^{\circ d}(U)$ is that almost all of the computational steps in the circuit can be thought of as preparations of ancilla states either of the form $\ket{\theta_j}:=C^{\circ j}(U)\ket{0}$ for $j=1,\ldots,d-1$ or $U\ket{0}$.  In fact, all but $1$ application of $U^\dagger$ can be implemented as ancilla preparations that are performed offline.  This means that the ancilla preparations can be performed prior to attempting the rotation, potentially by using multiple quantum information processors working in parallel.  In contrast, the final application of $U^\dagger$ cannot be performed in this manner and hence  is an online cost.  We do not discuss the success probability in \cor{successprob} because it varies depending on whether ancillas containing $\ket{\theta_j}$ are provided or not.  We show below that if such ancillas are provided then the success probability is bounded below by a constant for all $d$.  In constrast, we will see
that if no ancillas
are provided then, with high probability, multiple rounds of error correction will be needed for the algorithm to succeed with high probability.
%Furthermore, the states $\ket{\theta_j}$ are general purpose resource states that are useful

%We show in the following Lemma that the success probability of $C^{\circ d}(U)$ is bounded above by a constant in cases where the cost of ancilla preparation is shifted offline.

%In cases where a limitless supply of ancilla qubits containing these precomputed rotations is not provided, the adaptive nature of using error correction to correct the measurement errors that appear in implementing $C^{\circ d}(U)$ is difficult.  For this reason, we will focus our analytics first on the case where these ancilla qubits are provided.  The following lemma bounds the probability of failing to implement $C^{\circ d}(U)$ given

\begin{lemma}
For all integer $d>0$ and $\theta_0 < \pi/4$, if ancilla qubits of the form $U\ket{0}$ and $\ket{\theta_j}:=C^{\circ j}(U)\ket{0}$ for $j=1,\ldots,d-1$ are provided then $C^{\circ d}(U)$ can be implemented with failure probability at most $$P_{\rm fail}<\frac{1-\cos(4\theta_0)}{4}+\frac{2\tan^4(\theta_0)}{1-\tan^2(\theta_0)}.$$\label{lem:ancillaprob} %furthermore the expected number of attempts needed before a successful rotation is obtained at most $(1-P_{\rm fail})^{-2}$
\end{lemma}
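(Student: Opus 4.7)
The plan is to prove this by a union bound over the measurement outcomes at the $d$ successive levels of the recursive structure of $C^{\circ d}(U)$, combined with elementary analytic estimates. First, I would unpack \defn{recursive} so that $C^{\circ d}(U)$ is viewed as $d$ nested invocations of the elementary gearbox $C^{(1)}$. The hypothesis that the ancillas $U\ket{0}$ and $\ket{\theta_j}=C^{\circ j}(U)\ket{0}$ for $j=1,\ldots,d-1$ are pre-prepared is what lets me attribute exactly one failure-prone measurement to each level. Labelling the levels $k=1,\ldots,d$ from innermost (with ancilla $U\ket{0}$) to outermost (with ancilla $\ket{\theta_{d-1}}$), \cor{successprob} tells me that the ancilla at level $k$ effectively realises an $X$-rotation of angle $\alpha_k:=\tan^{-1}(\tan^{2^{k-1}}(\theta_0))$, with $\alpha_1=\theta_0$. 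Applying \thm{smallrot} to each level yields a per-level failure probability
$$p_k \;=\; 1-\cos^4(\alpha_k)-\sin^4(\alpha_k) \;=\; 2\sin^2(\alpha_k)\cos^2(\alpha_k) \;=\; \frac{1-\cos(4\alpha_k)}{4}.$$

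Second, I would union-bound $P_{\rm fail}\le\sum_{k=1}^{d}p_k$ and split off $k=1$; since $\alpha_1=\theta_0$, this gives the first summand $(1-\cos(4\theta_0))/4$ of the claimed bound for free. For the tail $\sum_{k\ge 2}p_k$ I would chain the elementary inequalities $1-\cos(4x)\le 8x^2$ (from $\sin y\le y$) and $\tan^{-1}(y)\le y$ for $y\ge 0$, producing
$$p_k \;\le\; 2\alpha_k^2 \;\le\; 2\tan^{2\cdot 2^{k-1}}(\theta_0) \;=\; 2\tan^{2^k}(\theta_0).$$
Writing $t:=\tan(\theta_0)$, the hypothesis $\theta_0<\pi/4$ ensures $0\le t<1$; together with the crude bound $2^k\ge 2k$ valid for all $k\ge 1$, I would dominate the tail by a geometric series:
$$\sum_{k=2}^{d} 2\tan^{2^k}(\theta_0) \;\le\; 2\sum_{k=2}^{\infty}(t^2)^k \;=\; \frac{2t^4}{1-t^2} \;=\; \frac{2\tan^4(\theta_0)}{1-\tan^2(\theta_0)},$$
which is exactly the second summand of the claimed bound. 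Adding the two contributions completes the estimate.

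The step I expect to be the main obstacle is the structural one: justifying that the supplied ancillas really collapse the natural binary-tree of $C^{\circ d}(U)$ into a linear chain carrying only $d$ failure-prone measurements. A naive unrolling (as drawn for $d=3$ in \fig{recursivecircuit}) places roughly $2^{d-k}$ independent level-$k$ measurements in the circuit, and since $p_1=\Theta(\theta_0^2)$ this would blow up the union bound by a factor of $2^{d-1}$ and destroy the claimed estimate. The key observation---alluded to in the caption of \fig{recursivecircuit}---is that every non-Clifford ingredient except the single rightmost $U^{\dagger}$ can be realised by consuming one of the supplied ancilla states, so exactly one measurement per recursion level survives. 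Once this reduction is granted, the analytic bookkeeping above is routine.
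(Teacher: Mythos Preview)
Your proposal is correct and follows essentially the same route as the paper's own proof: a union bound over the $d$ levels, isolating the $k=1$ term as $(1-\cos(4\theta_0))/4$, bounding each remaining term by $2\tan^{2^k}(\theta_0)$, and dominating the tail via $2^k\ge 2k$ by the geometric series $2t^4/(1-t^2)$. The only cosmetic difference is that the paper reaches $p_k\le 2\tan^{2^k}(\theta_0)$ directly from the closed form $p_k=2\tan^{2^k}(\theta_0)/(1+\tan^{2^k}(\theta_0))^2$, whereas you pass through $(1-\cos 4\alpha_k)/4\le 2\alpha_k^2$ and $\alpha_k\le\tan^{2^{k-1}}(\theta_0)$; both chains land on the same estimate. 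Your explicit discussion of the structural reduction (that the supplied ancillas collapse the $2^d-1$ measurements of the full tree to a chain of $d$) is in fact more careful than the paper, which simply asserts the conditional probabilities $P_{\rm fail}(j\mid j-1,\ldots,1)$ without dwelling on why only $d$ of them appear.
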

\begin{proof}
We know from \thm{smallrot} that the probability of successfully implementing $C^{\circ 1}(U)$ is $\cos(\theta_0)^4+\sin(\theta_0)^4$.  \cor{successprob} similarly tells us that the probability that the $j^{\rm th}$ measurement is successful given that $j\ge 2$ and all prior measurements were successful is
\begin{eqnarray}
 P_{\rm success}(j|j-1,\ldots, 1)&=\cos^4(\tan^{-1}(\tan^{2^j}(\theta_0)))+\sin^4(\tan^{-1}(\tan^{2^j}(\theta_0)))\nonumber\\
&=\frac{1+\tan^{2^{j+2}}(\theta_0)}{(1+\tan^{2^{j+1}}(\theta_0))^2}.
\end{eqnarray}
Therefore the probability of failure at step $j$, given success at all previous steps, obeys
\begin{equation}
P_{\rm fail}(j|j-1,\ldots,1)=\frac{2\tan^{2^{j+1}}(\theta_0)}{(1+\tan^{2^{j+1}}(\theta_0))^2}\le 2\tan^{2^{j+1}}(\theta_0).
\end{equation}
The probability of a failure occuring is at most the sum of the probabilities of failing at any given step and hence
\begin{eqnarray}
P_{\rm fail}&\le 1- \cos(\theta_0)^4-\sin(\theta_0)^4 +\sum_{q=1}^{d-1}2\tan^{2^{q+1}}(\theta_0)\nonumber\\
&\le 1- \cos(\theta_0)^4-\sin(\theta_0)^4 + \sum_{q=1}^{\infty}2\tan^{2(q+1)}(\theta_0)\nonumber\\
&\le \frac{1-\cos(4\theta_0)}{4}+\frac{2\tan^4(\theta_0)}{1-\tan^2(\theta_0)}.\label{eq:pfail}
\end{eqnarray}

%The bound on the mean number of trials required is clearly at most $1+\sum_{j=1}^\infty (j+1)P_{\rm fail}^j$, and our claim follows from summing the geometric series and simplifying the result.
\end{proof}
The upper bound on the success probability given by \lem{ancillaprob} can be used to estimate the number of times the circuit needs to be attempted, in cases where ancillas are provided since assuming the presence of ancilla states that contain $\ket{\theta_j}$ for $j=1,\ldots,d-1$ is equivalent to assuming that all previous computational steps have already been successfully implemented.  We expand on this reasoning in the following corollary.
\begin{corollary}\label{cor:expect}
For integer $d>0$ and $\theta_0< \pi/4$, the number of ancilla states of each type and the number $U^\dagger$ operations, $N_d$, that must be performed online to execute the circuit $C^{\circ 1}(U)$ successfully follows a probability distribution with mean and variance obeying
\begin{eqnarray}
\mathbb{E}(N_d) &\le \left(\frac{3+\cos(4\theta_0)}{4}+\frac{2\tan^4(\theta_0)}{1-\tan^2(\theta_0)}\right)^{-1},\nonumber\\
\mathbb{V}(N_d) &\le \frac{\frac{1-\cos(4\theta_0)}{4}-\frac{2\tan^4(\theta_0)}{1-\tan^2(\theta_0)}}{\left(\frac{3+\cos(4\theta_0)}{4}+\frac{2\tan^4(\theta_0)}{1-\tan^2(\theta_0)}\right)^2}.\label{eq:corcost}
\end{eqnarray}
\end{corollary}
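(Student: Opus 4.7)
The plan is to treat $N_d$ as the waiting time in a sequence of independent Bernoulli trials. Assuming the ancillas $U\ket{0}$ and $\ket{\theta_j}=C^{\circ j}(U)\ket{0}$ for $j=1,\ldots,d-1$ are supplied from an offline source, a single attempt at $C^{\circ d}(U)$ becomes a self-contained circuit whose internal preparation steps are free resources. The $d$ outer measurements inside such an attempt then determine success or failure in one shot, and by \lem{ancillaprob} the attempt fails with probability at most $P_{\max}:=\frac{1-\cos(4\theta_0)}{4}+\frac{2\tan^4(\theta_0)}{1-\tan^2(\theta_0)}$. Because \thm{smallrot} shows that any failure returns the data register to $e^{i\pi X/4}\ket{\psi}$ up to a global phase, the induced error is inverted by a single Clifford operation and a fresh attempt can be launched with a new batch of ancillas. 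The new trial is statistically independent of and identically distributed to its predecessor, so $N_d$ is a geometric random variable with some parameter $p\ge p_{\min}:=1-P_{\max}$.

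Next I would invoke the textbook formulas $\mathbb{E}(N_d)=1/p$ and $\mathbb{V}(N_d)=(1-p)/p^2$. Both are strictly decreasing in $p$ on $(0,1)$, as one sees for the variance from $\frac{d}{dp}\bigl[(1-p)/p^2\bigr]=(p-2)/p^3<0$, so replacing $p$ by the lower bound $p_{\min}$ only inflates the two right-hand sides. Substituting the explicit form of $p_{\min}$ from \lem{ancillaprob} then yields the inequalities claimed in \eq{corcost}, with the number of online $U^\dagger$ operations and the number of ancillas of each type both equal to $N_d$ because one copy of each is consumed per attempt.

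The main subtlety, and the only place requiring care, is the passage from the per-stage conditional failure probabilities of \lem{ancillaprob} to a single per-attempt Bernoulli probability. This is clean here precisely because the availability of the prepared ancillas $\ket{\theta_j}$ is equivalent to conditioning on successful preparation of every inner sub-tree, so the union bound over the $d$ outer measurements of $C^{\circ d}(U)$ collapses to exactly the quantity bounded in \lem{ancillaprob}. Once this identification is made, the remainder of the argument is a one-line application of the geometric distribution and presents no further obstacle.
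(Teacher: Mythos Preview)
Your proposal is correct and follows essentially the same route as the paper: identify $N_d$ as a geometric random variable with success probability $p$, invoke the standard formulas $\mathbb{E}(N_d)=1/p$ and $\mathbb{V}(N_d)=(1-p)/p^2$, observe that both are monotone in $p$ (equivalently in $P_{\rm fail}$), and substitute the bound on $P_{\rm fail}$ from \lem{ancillaprob}. Your write-up is in fact more explicit than the paper's own proof---you spell out the derivative check for monotonicity of the variance and articulate why the ancilla assumption collapses the per-stage bounds into a single per-attempt Bernoulli parameter---but the underlying argument is the same.
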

\begin{proof}
The number of times the measurement has to be repeated, $N_d$, is geometrically distributed with mean $1/P_d$ and variance $(1-P_d)/P_d^2$, where $P_d $ is the probability of the measurement succeeding.  Since the mean and the variance are monotonically increasing functions of $P_{\rm fail}$ therefore upper bounds for $\mathbb{E}(N_d)$ and $\mathbb{V}(N_d)$ can be found by substituting~\eq{pfail} into them because at most one of each of these types of resources are needed to attempt to implement $C^{\circ{d}}(U)$.  The proof of the corollary then follows by simplifying the result of this substitution.
\end{proof}

As an example, we find from substituting $\theta_0=\pi/8$ into~\eq{corcost} that the number of trials needed to implement $C^{\circ d}(HTH)$ follows a distribution with
$\mathbb{E}(N_d)< \frac{5}{4}$ and $\mathbb{V}(N_d)<\frac{1}{3}$.
Chebyshev's inequality then implies that if we define $X$ to be the number of trials needed to achieve a successful rotation then
\begin{equation}
{\rm Pr}(|X-\mathbb{E}(N_d)|\ge \chi)< \frac{1}{3\chi^2}.\label{eq:probbd}
\end{equation}
This implies that with high probability the number of each type of resource consumed in implementing the successful rotation is  a constant.  If the cost of each of these resources is assumed to be identical, then the cost of the algorithm is $O(d)=O(\log\log(\theta^{-1}))$ and the online cost of implementing the circuit is bounded above by a constant, with high probability.

The mean and the variance of the number of $U$ and $U^\dagger$ operations used to implement the rotation can also be computed in cases where no precomputed ancillas are provided.  In fact, the number of $U$ and $U^\dagger$ gates that are
needed to implement $C^{\circ d}(U)$ with high probability scales as $O(2^d)$.  We state this result in the following theorem.

\begin{theorem}\label{thm:fullcost}
Let $P_q=\sin(\phi_q)^4+\cos(\phi_q)^4$, where $\phi_q:= \tan^{-1}(\tan^{2^{q-1}}(\theta_0))$ for all integer $q\ge 1$ and let $n_d$ be a random variable representing the number of applications of $U$ or $U^\dagger$ used to enact $C^{\circ d}(U)$ in a given attempt.  Then the expectation value of $n_d$ is
$$\mathbb{E}(n_d) = \frac{2^d}{P_1\cdots P_{d}},$$
and for $\theta_0< \pi/4$ the variance of $n_d$ obeys
$$\mathbb{V}(n_d)\le \frac{2^{2d+1}(1-P_1)}{P_d^2\cdots P_1^2}\left(1+\frac{(P_d\cdots P_1)}{2^{d}P_1}\right).$$
\end{theorem}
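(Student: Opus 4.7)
The plan is to exploit the recursive structure $C^{\circ d}(U) = C^{\circ 1}(C^{\circ d-1}(U))$ and carry out an induction in $d$ for each of the two claims. Writing $V := C^{\circ d-1}(U)$, one attempt of the outer gearbox consists of preparing the ancilla in $V\ket{0}$ by invoking the $V$--subroutine (whose internal failures can be corrected and retried because each failed measurement imprints only $e^{i\pi X/4}$), applying the controlled $-iX$, then executing $V^\dagger$ with the same local-retry discipline, and finally measuring the ancilla, which returns $0$ with probability $P_d$ by \thm{smallrot}. Thus if $C$ denotes the (random) number of $U$/$U^\dagger$ applications consumed by one outer attempt and $N$ denotes the number of outer attempts until success, we have the decomposition $n_d = \sum_{i=1}^{N} C_i$, where $N \sim \mathrm{Geom}(P_d)$ is independent of the i.i.d.\ sequence $(C_i)$, and each $C_i$ is distributed as $n_{d-1} + n_{d-1}'$ for two independent copies of $n_{d-1}$ (coming from the $V$ and $V^\dagger$ invocations). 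Independence of $N$ from the $C_i$'s follows because the outer measurement outcome at attempt $i$ depends only on the post-$V^\dagger$ state, not on how many internal retries were used to realize $V$ and $V^\dagger$.

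For the expectation, applying $\mathbb{E}[\sum_{i=1}^N C_i] = \mathbb{E}(N)\mathbb{E}(C)$ to the decomposition gives the recurrence $\mathbb{E}(n_d) = 2\mathbb{E}(n_{d-1})/P_d$. The base case $\mathbb{E}(n_1) = 2/P_1$ is immediate since $C^{\circ 1}(U)$ uses exactly two applications of $U$ per attempt and succeeds with probability $P_1$. Unrolling the recurrence yields the claimed closed form $\mathbb{E}(n_d) = 2^d/(P_1 \cdots P_d)$.

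For the variance, the conditional-variance identity applied to the same decomposition gives $\mathbb{V}(n_d) = \mathbb{E}(N)\mathbb{V}(C) + \mathbb{V}(N)\mathbb{E}(C)^2 = (2/P_d)\mathbb{V}(n_{d-1}) + 4(1-P_d)\mathbb{E}(n_{d-1})^2/P_d^2$, using $\mathbb{V}(C) = 2\mathbb{V}(n_{d-1})$ by independence of the two subroutine runs. Substituting the exact mean formula and the inductive variance hypothesis into this recurrence and collecting terms causes the $\alpha_d := P_1\cdots P_d$ factors to align, and the inductive step reduces, after routine algebra, to the single scalar inequality $P_1(2 - P_d) \le 1$. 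Establishing this inequality is the main obstacle and is where the hypothesis $\theta_0 < \pi/4$ enters: it gives $\tan(\theta_0) < 1$, so the iterated angles $\phi_q = \tan^{-1}(\tan^{2^{q-1}}(\theta_0))$ are strictly decreasing in $q$, and since $P_q = 1 - \tfrac{1}{2}\sin^2(2\phi_q)$ is a decreasing function of $\phi_q$ on $[0,\pi/4]$, one has $P_d \ge P_1$. Consequently $P_1(2-P_d) \le P_1(2-P_1) = 1 - (1-P_1)^2 \le 1$, which closes the induction. The base case $\mathbb{V}(n_1) = 4(1-P_1)/P_1^2$, obtained from the geometric distribution on the number of attempts, is directly checked to sit comfortably below the stated bound (in fact by a factor of three).
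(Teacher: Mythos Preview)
Your proposal is correct and follows essentially the same line as the paper: both set up the random-sum decomposition $n_d=\sum_{i=1}^{N} C_i$ with $N\sim\mathrm{Geom}(P_d)$ independent of the i.i.d.\ $C_i\overset{d}{=}n_{d-1}+n_{d-1}'$, apply Wald-type identities to obtain the identical recurrences for mean and variance, and invoke the monotonicity $P_d\ge P_1$ (valid for $\theta_0<\pi/4$) to close the variance bound.

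The only noteworthy difference is in how the variance recurrence is handled. The paper unfolds the recurrence into an explicit sum $\sum_j 2^{2d-j-1}(1-P_{d-j+1})\prod_{k<j}P_{d+1-k}/\alpha_d^2$ and then bounds it term-by-term using $1-P_{d-j+1}\le 1-P_1$ followed by a geometric series. You instead carry the claimed bound as an inductive hypothesis and reduce the inductive step to the single scalar inequality $P_1(2-P_d)\le 1$, which you dispatch via $P_d\ge P_1\Rightarrow P_1(2-P_d)\le P_1(2-P_1)=1-(1-P_1)^2\le 1$. Your route is algebraically cleaner and avoids the explicit telescoping; the paper's unfolding, on the other hand, exposes the intermediate sum that could in principle be bounded more sharply (e.g.\ retaining the $1-P_{d-j+1}$ factors rather than uniformly replacing them by $1-P_1$). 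Both rely on the same monotonicity fact, so neither genuinely avoids the hypothesis $\theta_0<\pi/4$.
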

%The proof of~\thm{fullcost} follows from thinking of $C^{\circ d}(U)$ as a tree--like structure and then introducing a random variable describing the distribution of the number of applications of $U$ required to build the tree to depth $j$ that is itself a
%sum of random variables that describe the cost of building the tree to depth $j-1$.  The expectation and the variance are then found using elementary inequalities and the fact that the constituent random variables are geometrically distributed.
To prove \thm{fullcost}, we think about our non-determinitic circuits  as ones that always succeed, but require
a random number of steps to do so. We introduce two random variables to describe the number of measurements required for the measurement at the $n^{\rm th}$ level of our tree to succeed: one that describes number of attempts needed
to successfully execute the branch before the  controlled $-iX$ at the $n^{\rm th}$ level and the other describes the number of attempts needed for the branch after the controlled $-iX$ and before all measurements. We then express the mean and variance of the number of attempts required to execute the $n^{\rm th}$ level of
the tree in terms of the mean and variance of the variables introduced to describe the number of attempts needed
to succeed on the $(n-1)^{\rm st}$ level.  We get a recursive relation for mean and variance that we then unfold and simplify using simple upper bounds. The same idea can be used
to analyze more complicated tree-like non-deterministic circuits.
Proof is given in~\ref{app:fullcost}.

\thm{fullcost} shows that the mean and the standard deviation of the number of applications of $U$ and $U^\dagger$ used to implement $C^{\circ d}(U)$ scales as $\Theta(2^d)$ and $O(2^d)$ respectively for $\theta_0\le \pi/8$.
This follows from the fact that for $\theta_0\le \pi/8$,
\begin{eqnarray}
\frac{1}{P_d\cdots P_1}&\le \frac{1}{(1-2\tan^{2^{d}}(\theta_0))\cdots (1-2\tan^2(\theta_0))}\le\frac{1}{\exp(-4\sum_{k=1}^d\tan^{2^{k}}(\theta_0))}\nonumber\\
&\le\frac{1}{\exp(-4\sum_{k=1}^d\tan^{2{k}}(\theta_0))}\nonumber\\
&\le \exp\left(\frac{4\tan^{2}(\theta_0)}{1-\tan^{2}(\theta_0)}\right)< \frac{5}{2}.
\end{eqnarray}

Chebyshev's inequality therefore implies (similarly to the case discussed above where precomputed ancillas are used) that, with high probability, the number of $U$ and $U^\dagger$ gates needed to implement the rotation will also scale as $O(2^d)$.  This procedure  is efficient because $d$ scales doubly--logarithmically with the desired rotation angle.  The complexity of implementing $C^{\circ d}(U)$ is therefore logarithmic in $1/\theta$ for any fixed $U$ with $\theta_0\le \pi/8$.

\begin{figure}[t!]
%\begin{indented}
 \includegraphics[width=0.7\textwidth]{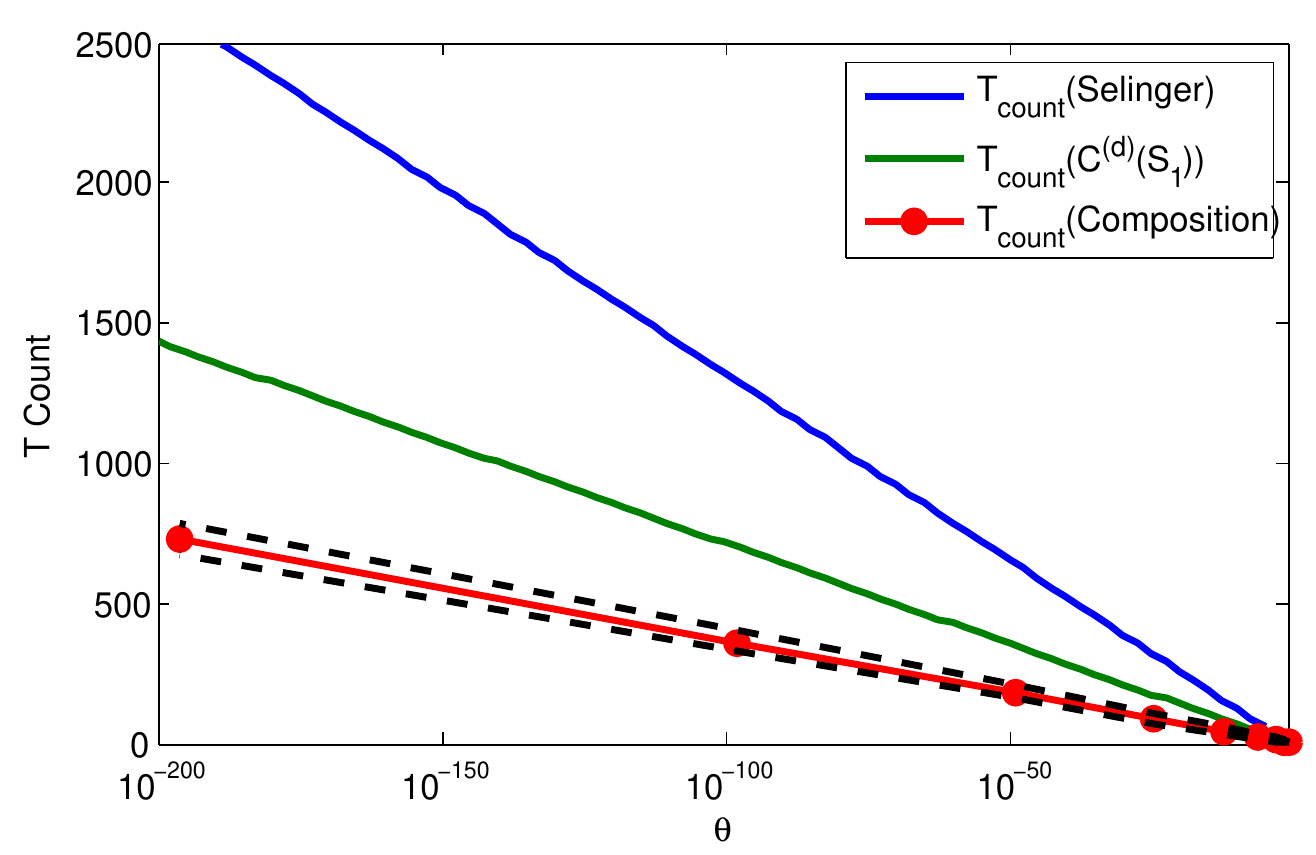}
%\end{indented}
\caption{Here we compare the mean \Tcount for our composition based method given by $C^{\circ d}(HTH)$ to Selinger's method and also directly using the gearbox circuit $C^{(d)}(S_1)$.  The dashed lines give the upper and lower limits of a $95\%$ confidence interval for the \Tcount that arises from using the composition method to $U_e$.  We see that the composition method offers superior performance to that of the circuit $C^{(d)}(S_j)$ and Selinger's method.  $500$ samples were used to compute the expectation values of the scalings for both non--deterministic methods.    \label{fig:compare2}}
\end{figure}

%Since there are $2^{d-1}$ state preparations used, the probability of success is at least $0.585^{2^{d-1}}$.  Although this success probability may appear to be minuscule, it is not in most practical cases because $d$ need only be doubly--logarithmically large in the desired rotation angle for the composition method.

This implies that, on average, the number of \T gates required to implement $e^{-i \tan^{-1}(\tan^{2^d}(\pi/8)) X}$ is at most
\begin{equation}
\mathbb{E}(n_d)\le 5\cdot 2^{d-1} < 2\log_2(1/\tan(\theta)),
\end{equation}
where $\theta=\tan^{-1}(\tan^{2^d}(\pi/8))$.   This estimate results from the use of several inequalities and it is therefore reasonable expect the actual expectation value of the \T  count to be smaller.
  The data in~\fig{compare2} suggest that the mean value for the \T  count (which is proportional to $n_d$ for $U=HTH$) actually obeys
\begin{equation}
\mathbb{E}(n_{d(\theta)})\approx 1.11 \log_2(\theta^{-1})-0.01,\label{eq:compscale}
\end{equation}
for $d\in \Theta (\log(\log(\theta^{-1})))$. The resultant \Tcount is smaller than that of~\cite{Sel12}  (which is known to give optimal scaling in cases where $\theta$ is chosen adversarially and no ancilla bits are permitted)  or those that arise from a direct application of the gearbox circuit.
We will see shortly that this scaling is in fact better than the best possible scaling achievable in any circuit synthesis method using only $H,T$ and CNOT gates.
Furthermore, the slopes of the $2.5^{\rm th}$ and $97.5^{\rm th}$ percentile of the \Tcount are approximately $1.04$ and $1.18$ respectively.  We extend the $x$--axis to $10^{-200}$ radians (which is unreasonably smal for most applications) to accurately assess the scaling and emphasize that relatively small values of $d$ can lead to miniscule rotation angles.  This suggests that small rotations generated by $C^{\circ d}(HTH)$ will have, with high probability, smaller \Tcounts than existing methods.
A drawback of using $C^{\circ d}(HTH)$ as opposed to $C^{(d)}(HTH)$ to generate $U_e$ is that $C^{\circ{d}}$ generates small rotation angles that scale as $\tan^{2^d}(\pi/8)$, which does not give fine control over the rotation angle if only the variable $d$ is used to control
the rotation.  %We will show how to address this drawback shortly.  %This drawback can be addressed using a number of strategies:

\begin{figure}[t]
%\begin{indented}
 \includegraphics[width=0.8\textwidth]{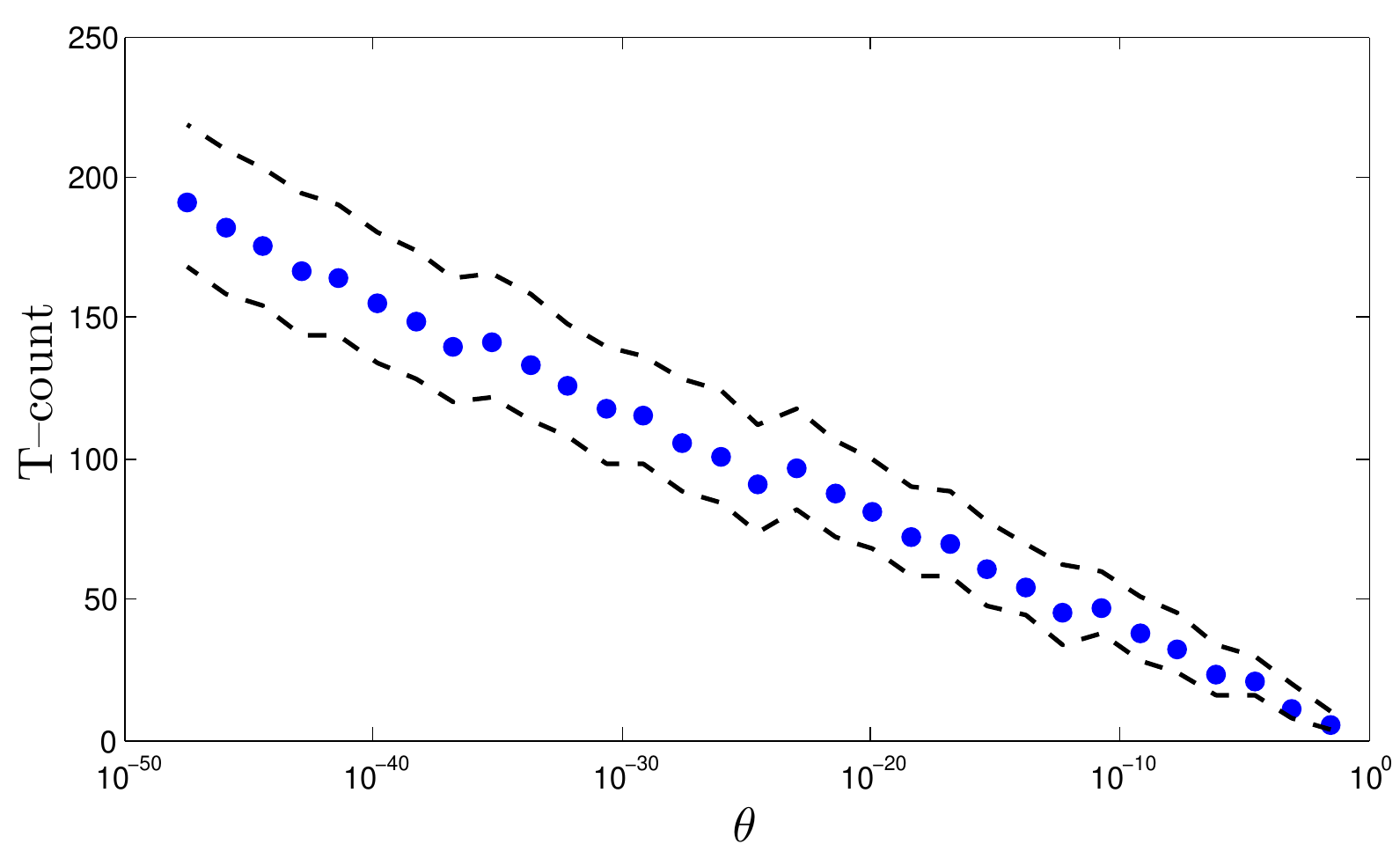}
%\end{indented}
\caption{Here we plot the mean \Tcount for $C^{d}(C^{\circ j_1}(HTH),\ldots,C^{\circ j_d}(HTH))$ as a function of the rotation angle generated by the circuit.  The dashed lines give the upper and lower limits of a $95\%$ confidence interval for the \Tchar--count, and the dots show the average \Tchar--count.  The data scales approximately as $a\log_2(1/\theta)+4.2$, where the value of $a$  that gives the least--square error is $1.14$ and $a\in [1.05, 1.20]$ with probability $0.95$.  2000 samples were used to find the distribution of the \Tcount for each value of $\theta$.  \label{fig:skexponent}}
\end{figure}

 %There are several ways to improve accuracy of the rotation angle implemented by $U_e$.  Firstly,  our existing construction could be used to provide ancilla states for use in PAR circuits (CITE CODY JONES), secondly modifications to the recursive structure of $C^{\circ d}(U)$ can be introduced to allow a wider range of rotation angles or the gearbox circuit could be used in conjunction with $C^{\circ d}(U)$ to achieve finer control over the resultant rotation angle.  We will focus on this final approach.

%Our composition method gives a method for non--deterministically implementing $e^{-i\phi X}$ with $\phi=\tan^{-1}(\tan(\pi/8)^{2^d})$ using a small number of \T gates.  The principle drawback of directly using this method is that the circuit synthesis method may not be able to implement a rotation angle that is not close to $\tan^{-1}(\tan(\pi/8)^{2^d})$.
The problem of poor control over the rotation angle used for $U_e$ can be addressed, at a modest cost, by using the gearbox circuit $C^{(d)}$ in tandem with the composed gearbox circuit $C^{\circ d}(HTH)$.  In particular, let $D_1,\ldots,D_d$ be positive integers.  Then $C^{(d)}(C^{\circ D_1}(HTH),\ldots,C^{\circ D_d}(HTH))$ non-deterministically implements $e^{-i\phi X}$ for
\begin{eqnarray}
\phi=\tan^{-1}[\tan^{2}\big(\phi(D)\big)]\approx (0.1716)^{2^{D_1}+\cdots+2^{D_d}},\label{eq:phiDapprox}
\end{eqnarray}
where
\begin{equation}
\phi(D):=\sin^{-1}\left[\sin\left(\tan^{-1}\big(\tan^{ 2^{D_1}}(\pi/8)\big)\right)\times\cdots\times\sin\left(\tan^{-1}\big(\tan^{ 2^{D_d}}(\pi/8)\big)\right)\right].\label{eq:phiddef}
\end{equation}
By using a binary expansion and a Taylor series expansion of the trigonometric functions, it can be seen that the circuit implements $e^{-i\phi X}$ for $\phi=\tan^{4q}(\pi/8)+O(\tan^{12q}(\pi/8))$ and integer $q$.  This allows us to address the problems posed by using our composition method to construct the rotation angle at the cost of additional \T gates.

\fig{skexponent} contains a plot of the rotation angles generated by combining the rotations generated using our composition method via the gearbox circuit.   We see in the figure that the rotation angles obtained approximately decrease by factors $0.031$, as anticipated by the prior discussion.
We also find that the expectation value of the \Tcount of this algorithm scales roughly as $a\log_2(1/\theta)+4.2$ where  $a\approx 1.14$ giving the line of best fit and $[1.05,1.20]$ gives a $95\%$ confidence interval for $a$.   The typical overhead from using $C^{(d)}(C^{\circ D_1},\ldots,C^{\circ D_d})$ to implement the rotation is minimal because the cost of implementing a small rotation using $C^{\circ d}(HTH)$ followed a similar scaling with $a\approx 1.11$, which falls within the $95\%$ confidence interval for the value of $a$ corresponding to $C^{(d)}(C^{\circ D_1},\ldots,C^{\circ D_d})$.

%---------------------------------------------------------------
\section{Constructing the Floating Point Representation}\label{sec:floatingpoint}
%---------------------------------------------------------------
The preceding discussion shows how we can use our composition method in conjunction with the gearbox circuit to implement a given $U_e$.  Our next goal is to use this idea to implement an arbitrary $X$--rotation by using this method to generate the exponent of our floating point representation, $U_e$, and another technique to implement the mantissa $U_m$.  The circuit that implements the necessary rotation is given in \fig{corfloat}.

\thm{smallrot} implies that, conditioned on the successful implementation of the $C^{\circ D_j}(HTH)$, the circuit will implement $e^{-i\phi X}$ for
\begin{equation}
\phi=\tan^{-1}\Big(\tan^{2}\left[\sin^{-1}\large(|U_{m_{1,0}}|\sin\big(\phi(D)\big)\large)\right]\Big)\approx |U_{m_{1,0}}|^2 \tan^{2(2^{D_1}+\cdots 2^{D_d})}(\pi/8),
\end{equation}
where $\phi(D)$ is defined in~\eq{phiddef}.

%We then can use this circuit to implement an arbitrary rotation, $e^{-i \phi X}$, for $\phi=\alpha \times 10^{-\gamma}$ by choosing $U_m$ to implement a unitary with $\tan^{-1}(\tan^{4(D_1 +2^{D_2}+\ldots 2^{D_d})}(\pi/8))=\alpha \times 10^{-\gamma}$ while choosing $D_1,\ldots, D_d$ to be the smallest integers such that $\tan^{-1}(\tan^{4(D_1 +2^{D_2}+\ldots 2^{D_d})}(\pi/8))\ge 10^{-\gamma}$.
We describe the process involved in using this floating point implementation of the rotation below.

\begin{algorithm}
\caption{Floating Point Implementation of $e^{-i\phi_{\rm in} X}$.}
\label{alg:1}
\begin{algorithmic}[1]
\Require $\alpha$, $\gamma$ such that $\alpha\times 10^{-\gamma}:=\phi_{\rm in}$, $0<\alpha<1$ and $\gamma$ is an integer, $\delta>0$, Quantum state $\ket{\psi}$, A circuit synthesis algorithm $\mathcal{C}: (U(2),\mathbb{R})\mapsto U(2)$ such that for all $U\in U(2)$ and $\epsilon\ge 0$, $\|\mathcal{C}(U,\epsilon)-U\|\le \epsilon$.
\Ensure A quantum state approximating $e^{-i\phi X}\ket{\psi}$ within error $O(\delta \times 10^{-\gamma})$.

\State Set $k= \lfloor \phi_{\rm in}/ (\pi/4) \rfloor$.\label{algstep:start}
\State $\ket{\psi}\rightarrow HS^kH\ket{\psi}$.
\If{$|\phi_{\rm in}- k\pi/4|\le \delta$}
\State \Return $\ket{\psi}$\label{algstep:returnpsi}
\Else\label{algstep:else}
\State Set $\phi_{\rm rem}=\phi_{\rm in} - k\pi/4$.
\State Find the smallest value of $\phi(D)$, and the corresponding values of $D_1<\cdots<D_d$, such that $$\sin(\phi(D))\ge \sqrt{\frac{\tan \phi_{\rm rem}}{1+\tan \phi_{\rm rem}}}.$$\label{algstep:1}
\State Set $U_m=\mathcal{C}(\exp(-i\tilde{\phi} X),\delta)$ where $$\tilde{\phi}=\sin^{-1}\left(\frac{1}{\sin(\phi_D)}\sqrt{\frac{\sin \phi_{\rm rem}}{\cos \phi_{\rm rem} +\sin \phi_{\rm rem}}}\right).$$\label{algstep:final}
%\State Use a circuit synthesis method to design $\tilde U_m$ such that $|\tilde U_{m_{1,0}}-U_{m_{1,0}}|\le \delta$.
\State \Return $C^{(d+1)}(\mathcal{C}(U_m,\delta),C^{\circ D_1}(HTH),\ldots,C^{\circ D_d}(HTH))\ket{\psi}$.
\EndIf

%\Function {Mean}{$\{w_i\}$, $\{\vec{x}_i\}$}
  %\State \Return $\vec{\mu} \gets \sum_i w_i \vec{x}_i$
%\EndFunction
%\\
%\Function {Cov}{$\{w_i\}$, $\{\vec{x}_i\}$}
%\Comment Estimates the covariance matrix $\Cov(\vec{x}) = \expect[\vec{x} \vec{x}^{\T}] - \expect[\vec{x}] %\expect[\vec{x}]^{\T}$.
  %\State $\vec{\mu} \gets \sum_i w_i \vec{x}_i$
 % \State $\matr{\Sigma} \gets \sum_i w_i \vec{x}_i \vec{x}_i^{\T} - \mu \mu^{\T}$
%  \State \Return $\matr{\Sigma}$
%\EndFunction
\end{algorithmic}
\end{algorithm}

The algorithm can be seen to output the desired rotation via the following argument.
It is easy to see that steps~\ref{algstep:start}--\ref{algstep:returnpsi} will return a distance $\delta$ approximation to
the rotation angle, given that the desired rotation obeys $\min_{k\in \mathbb{Z}}|\phi_{\rm in} -k\pi/4|\le \delta$.
The remaining cases can then be handled by implementing  $e^{-ikX\pi/4}$ using Clifford operations and synthesizing a rotation that implements $e^{-i(\phi_{\rm in} -k\pi/4)X}$ within precision $O(\delta \times 10^{-\gamma})$.

We have from \thm{smallrot} that the rotation angle implemented, for the ideal choice of $U_{m}$ is
\begin{equation}
\phi_{\rm rem}=\tan^{-1}(\tan^2(\sin^{-1}(|(U_{m})_{1,0}| \sin \phi(D))))\!=\! \tan^{-1}\!\left(\!\frac{|(U_{m})_{1,0}|^2 \sin^2 \phi(D)}{1-|(U_{m})_{1,0}|^2 \sin^2 \phi(D)}\!\right).\label{eq:finrot}
\end{equation}
We are constrained, however, to have $|(U_{m})_{1,0}|\le 1$ in our solution.  We find the range of physically allowable solutions by setting $|(U_{m})_{1,0}|=1$ and then solving for $\phi(D)$ to find that
a valid solution exists if \begin{equation}
\sin(\phi(D))\ge \sqrt{\frac{\tan \phi_{\rm rem}}{1+\tan \phi_{\rm rem}}},
\end{equation}
which is guaranteed by Step~\ref{algstep:1}.
Then given any such choice of $D$,  we  solve~\eq{finrot} for the corresponding value of $|(U_m)_{1,0}|$ and  find that
\begin{equation}
|(U_{m})_{1,0}| = \frac{1}{\sin(\phi_D)}\sqrt{\frac{\sin \phi_{\rm rem}}{\cos \phi_{\rm rem} +\sin \phi_{\rm rem}}}\in \Theta(1).
\end{equation}
The $x$--rotation chosen in Step~\ref{algstep:final} yields the desired rotation and hence the algorithm will as well, modulo the error incurred in the synthesis of $U_m$.

\begin{figure}[t!]
\[
\small
\newcommand{\up}[1]{\push{\raisebox{6pt}{$#1$}}}
 \Qcircuit @C=0.7em @R=0.7em {
\lstick{\ket{0}}&\qw&\gate{U_m}&\ctrl{1}&\gate{U_m^{\dagger}}&\qw&\meter\\
\lstick{\ket{0}}&\qw&\gate{C^{\circ D_1}(HTH)}&\ctrl{1}&\gate{C^{\circ D_1{\dagger}}(HTH)}&\qw&\meter\\
\lstick{\ket{0}}&\qw&\gate{C^{\circ D_2}(HTH)}&\ctrl{1}&\gate{C^{\circ D_2 {\dagger}}(HTH)}&\qw&\meter\\
&&\up{\vdots}&\up{\vdots}&\up{\vdots}&&\\
\lstick{\ket{0}}&\qw&\gate{C^{\circ D_d}(HTH)}&\ctrl{-1} \qwx[1]&\gate{C^{\circ D_d {\dagger}}(HTH)}&\qw&\meter\\
\lstick{\ket{\psi}}&\qw&\qw&\gate{-iX}&\qw&\qw&\qw\\
}
\]
\caption{This circuit gives the floating point implementation of a rotation for a given mantissa unitary $U_m$.  Unlike~\fig{expcircuit}, this circuit uses $d$ different composed gearbox circuits
to form the exponent part rather than just one.  This provides greater control over the rotation than would be possible with just one composed gearbox.   Note that that the multiply contrilled $-iX$ gate can be implemented using $4(d-1)$ \T gates as discussed in~\cite{CJ13}.  \label{fig:corfloat}}
\end{figure}
We have already established in~\eq{phiDapprox} that $\phi_{D}$ will be within a constant factor of $\phi_{\rm rem}$, and hence $\phi_D \in \Theta(10^{-\gamma/2})$.  We then see from Taylor's theorem that
\begin{eqnarray}
\tan^{-1}\left(\frac{(|(U_{m})_{1,0}|+\delta)^2 \sin^2 \phi(D)}{1-(|(U_{m})_{1,0}|+\delta)^2 \sin^2 \phi(D)}\right)\nonumber\\
 \qquad=  \tan^{-1}\left(\frac{|(U_{m})_{1,0}|^2 \sin^2 \phi(D)}{1-|(U_{m})_{1,0}|^2 \sin^2 \phi(D)}\right) + O(\delta \sin^2 \phi(D)),
\end{eqnarray}
which verifies that the error is $O(\delta \times 10^{-\gamma})$ as required.

A cost analysis of the floating point method is given in \app{cost}, wherein we show that the $T$--count required by the floating point method approximately scales as $1.14 \log_2(1/\theta)$ for constant precision.  Similarly, the circuit depth and the online $T$--count scale as $O(\log\log(1/\theta))$.  This implies that floating point synthesis is not only less expensive than traditional synthesis methods (as measured by the $T$--count) but much of this cost can be distributed over parallel quantum information processors.

\section{Example: Implementing $\mathbf{\textbf{exp}(-i \pi Z/2^{16}})$ Using Floating Point Synthesis\label{sec:example}}
We will now give an illustrative example of our floating point technique for synthesizing the operation $e^{-i\pi/2^{16}Z}$.  This rotation is significant because it appears in the quantum Fourier transform.  We have found, by using techniques described in the subsequent section and~\cite{prappr}, that the \T--optimal circuit that estimates this rotation more accurately than $e^{-i\pi/2^{16}Z}\approx \openone$ consists of $57$ \T gates.  The next shortest circuit contains $60$ \T--gates.  This implies that the cost of synthesizing the rotation using an optimal circuit synthesis method and the $\ClT$ gate library changes abruptly when an approximation to the rotation with even one digit of precision is needed.

First, note that $R_z(\theta)=HR_x(\theta)H$ and hence the $x$ rotations that naturally arise from our method can be easily translated to $z$--rotations using Clifford operations (which we assume are inexpensive).  This implies that the problem of synthesizing the rotation reduces to that of synthesizing $e^{-i \pi/2^{16} X}$.  Following Algorithm~\ref{alg:1}, we choose $U_e$ to be $C^{(2)}(\pi/8)$ because $\tan^{-1}(\tan^4(\pi/8))> \sqrt{\pi/2^{16}}$.  We then find numerically that the mantissa part of the rotation must satisfy $$|({U_{m}})_{1,0}|\approx 0.235.$$  Finally, we exhaustively search for the two shortest circuits that give a unitary that has off--diagonal matrix elements of comparable magnitude to the ideal value and examine the performance of our floating point method for both these choices of $U_m$ by performing a Monte--Carlo simulation of the \Tcounts required to use our floating point method.
The results of this Monte--Carlo simulation are given in \tab{1}.

%We then find by Monte--Carlo simulation using 500 samples that if the mantissa unitary is taken to be $U_m^a$ then with probability $0.95$ the \Tcount required to implement $\exp(-i (\pi/2^{16})Z)$ is within the interval $[21,33]$ and the expectation value of the \Tcount is $30.3$, and when successful the relative error is $0.35$.  If smaller error is desired, then $U_m$ can be taken to be $U_m^b$, in which case the $95\%$ confidence interval for the \Tcount is $[27,39]$ with expectation value $30.3$ with relative error $0.13$.  Since the simplest non--trivial circuit for implementing the rotation consists of $56$ \T--gates, this represents  a reduction of nearly a factor of $2$ (with high probability) in the cost of implementing the rotation in cases where only modest relative precision is required.

We see from the data in \tab{1} that circuits derived from the floating point method require, with high probability, nearly half the \T gates required by the optimal synthesis method in order to produce non--trivial approximations with comparable relative error.  As the desired relative error shrinks, floating point synthesis begins to lose its advantage the cost of the mantissa circuit will eventally approach half the cost of synthesizing the rotation. Thisl results in an  approximation that is inferior to optimal single--qubit systnehsis because the mantissa circuit must be applied twice. We see that in the case where a mantissa circuit with $29$ $T$ gates is used, requires a comparable number of $T$ gates to the optimal single qubit rotation $S_7$ but incurres nearly $5$ times the error.  We discuss the regime where floating point synthesis yields a superior \Tcount to optimal single qubit synthesis in detail in~\app{cost}.

It is easy to also see that larger rotations can also benefit from floating point synthesis.  For example, consider $\exp(-i \pi Z/ 2^8)\approx \exp(-i 0.0123 Z)$.  In this case, we see from~\tab{1} that synthesizing this rotation within $1$ digit of precision requires a minimum of $11$ $T$--gates using the single qubit Clifford, $T$ gate library.  In contrast, floating point synthesis can achieve the same rotation using on average $9.2$ $T$ gates (using $U_m=H$ and $U_e = C^{(1)}(HTH)$).  This shows that the floating point synthesis can be valuable for synthesizing even modestly large rotations.

The floating point circuits also have the benefit of requiring a substantially smaller online cost (meaning that many of the required operations can be implemented using precomputed ancillas~\cite{CJ12}).  For the cases considered in \tab{1}, these costs are approximately $8,11$ and $34$ \T gates and the majority of the online cost is incurred in implementing the Toffoli gate and $U_m^\dagger$ ($U_m$ can be implemented offline).  The circuits also are more resilient to gate faults and approximate the rotation with an axial rotation (in contrast to conventional methods).  Such costs could be further reduced by using variants of gearbox circuits to synthesize $U_m$.  For these reasons, floating point synthesis can provide more desirable circuits than traditional synthesis methods even if it does not lead to a substantial reduction in the $T$--count.

\begin{table}[t!]
\caption{This table compares the \Tcounts that result from synthesizing $e^{-i Z \pi/2^{16}}$ using our floating point method to those that arise from optimal synthesis using the gate library $\ClT\!\!\!$.   $V_1$ and $V_2$ are the two shortest circuits that provide a better approximation to the rotation than $e^{-i Z \pi/2^{16}}\approx \openone$.  $M_{29}$ is the most accurate approximation to $U_m$ possible using $29$ $T$--gates (without using ancillas).  The mean and confidence intervals were calculated using $40000$ samples and the mean value agrees with the result of \thm{fullcost} within statistical error.\label{tab:1}}
% \begin{indented}
 
 {\footnotesize
\begin{tabular}{|c|c|c|c|l|}
\hline
$U_m$ & Mean      & Variance & $95\%$ Confidence  & Relative\\
      & \Tcount &          & Interval & Error\\
\hline
$HZ THZ THZ TH$ & 21.3 & 11.0& [18,30] & 0.35\\
$HTHTHTHTHTHTH$ & 27.3&11.0 & [24,36] & 0.13\\
$M_{29}$ &73.3 &11.0 &[70,82] & 0.0029\\
\hline
\hline
Circuit & \Tcount & -- & --& Relative \\
 &  &  &   & Error\\
\hline
$V_1$ &  57 & & &0.17\\
$V_2$ &  60 & & & 0.058\\
$V_7$ & 71 & & & 0.00056\\
\hline
\end{tabular}
}
%\end{indented}
\end{table}

\section{Optimal Ancilla--Free Single--Qubit Synthesis of Small Rotations}\label{sec:optimal}
In this section, we extend methods described in \cite{prappr} to find circuits chosen from the $\ClT$ library with the smallest possible (non-zero) off-diagonal entries. The algorithm described guarantees optimality of the found circuits. The result of the section shows that gear box circuits involving ancillary qubits and measurement reduces the \Tcounts below the best possible \Tcounts in a purely unitary single qubit construction. This shows that the use of ancillas and measurement leads to a significant advantage for synthesizing rotations.

%% need a definition of the ring and sde to continue the discussion
More precisely, the problem we are interested in is the following: amongst all circuits with optimal \Tcount $n$ find one that corresponds to a unitary with a minimal possible off-diagonal entry. We say that circuit has optimal \Tcount $n$ if any other circuit drawn from $\ClT$~library implementing the same unitary requires at least $n$ \T gates. We reduce the problem to searching for unitaries over the ring 
\[
\Zr:= \left\{ \left. \frac{a+b\w+c\w^2+d\w^3}{\sqrt{2}^\K} \right| a,b,c,d,\K \in \Z \right\}, \w:=e^{i \pi/4}
\]
with a certain property that we discuss in detail later in this section. It is known that any circuit over \ClT library corresponds to a unitary over $\Zr$; furthermore, the results presented in \cite{es} show that there is a tight connection between optimal \Tcount and entries of the unitary. The notion of the smallest denominator exponent ($\sde$) allows us to express the connection formally. For numbers of the form 
\[
(a+b\sqrt{2})/\sqrt{2}^m, a,b,m \in \Z, m \ge 0
\]
we define $\sde$ as a minimal possible $m$, $m_{\min}$ such that the number can be written in the form $(a'+b'\sqrt{2})/\sqrt{2}^{m_{\min}},$ for $a',b'\in \Z.$ 

Let $u$ be an off-diagonal entry of a unitary $U$ over the ring $\Zr$ and let $\sde(|u|^2)=m.$ It was shown in Appendix B in \cite{es} that the optimal \Tcount for the circuit implementing the unitary $U$ can only be $m-2,m-1,m.$ It turns out that for given $|u|^2$ there always exists a circuit with optimal \Tcount $m-2.$ Indeed, by multiplying $U$ from right or left side by some power of  \T we can always achieve optimal \Tcount $m-2$ (see Appendix B in \cite{es}). From the other side, multiplying a unitary by powers of \T leaves the absolute value of its off-diagonal entries unchanged. 

To find a circuit implementing the unitary we apply the exact synthesis algorithm of \cite{es}, which produces a circuit with optimal number of \T gates. The algorithm is based on the fact that $\sde(|\cdot|^2)$ defines the complexity of the circuit that the unitary implements. The algorithm works by multiplying the unitary by $HT^l$ choosing $l$ to reduce $\sde(|\cdot|^2)$ of resulting unitary entries. The algorithm repeats this greedy approach until it reaches $\sde(|\cdot|^2)=3$ and then looks up the optimal circuit in a small database. More detailed description of the algorithm and the proof of \T optimality of produced circuits can be found in \cite{es}. 

Based on the discussion above we can restate the initial problem as: for fixed $m$ find a unitary with a minimal (but non--zero) off-diagonal entry $u$ such that $\sde(|u|^2)=m.$ The simplest approach is to go through all elements of the set 
\[
S_m:= \left\{  u \left| \begin{array}{c}
u \in \Zr,\: \sde(|u|^2) = m, \\
\exists v \in \Zr:|u|^2 + |v|^2 = 1 \\
\end{array} \right. \right\}
\]
and find its element with minimal absolute value. The condition $|u|^2 + |v|^2 = 1$ assures that there exist a unitary with off-diagonal entry $u.$ Therefore going through the set above is the same as going through all unitaries over the ring $\Zr.$ As a side note, the condition $\exists v \in \Zr : |u|^2 + |v|^2 = 1$ must be explicitly enforced because there exists $u \in \Zr$ such that $|u|<1,$ but $u$ is not an entry of any unitary over the ring $\Zr$. 

To iterate through all elements of $S_m$ it suffices to go through all $u \in \Zr$ with $\sde(|u|^2)=m$ and check the second condition $|u|^2 + |v|^2 = 1$. For $v$ expressed as $(v_0+v_1\w+v_2\w^2+v_3\w^3)/\sqrt{2}^\K$ the condition can be written as 
\[
\left|v_0+v_1\w+v_2\w^2+v_3\w^3\right|^2 = A + B\sqrt{2}, A,B \in \Z.
\]
The algorithm for solving such equations is known and is a part of several computer algebra systems. We use PARI/GP~\cite{pari} to check the existence of the solution for given $A,B.$ 

There is a systematic way to go through $u \in \Zr$ with $\sde(|u|^2)=m.$ Each $u$ can be described by five integers $a,b,c,d,\K$ and written as $(a+b\w+c\w^2+d\w^3)/\sqrt{2}^\K.$ The condition that $\sde(|u|^2)=m$ implies that we can chose $\K=\lceil m/2 \rceil.$ In addition, $u$ is required to be an entry of a unitary, therefore $|u|^2 + |v|^2 = 1$ for some $v.$ Multiplying the equality by $2^{\lceil m/2 \rceil}$ and collecting integer terms results in inequality
\[
 |a|^2 + |b|^2 + |c|^2 + |d|^2 \le 2^{\lceil m/2 \rceil}. 
\]
In summary, to go through all $u$ such that $\sde(|u|^2)=m$ it suffices to go through integers $a,b,c,d$ satisfying the inequality. The complexity of such a search procedure is exponential in $m.$ In the second part of this section we describe a search procedure that is still exponential, but more efficient and allows us to reach $m$ high enough to be interesting for our purposes. Note that to get the minimal absolute value $\delta$ of the off-diagonal entries found we need to consider $m$ that is in $O(\log(1/\delta))$; the complexity of both the simple and the improved search procedures is polynomial in $1/\delta.$
 
The improved search procedure uses additional information to shrink the search space. In particular we require that an upper bound $\ve$ on $|u|^2$ for given $m$ is provided as an input.  This bound can be taken to be the minimal value of $|u|^2$ for $m-1$. The procedure fails if the bound is too tight and an error message is returned, allowing the user to specify a less stringent error tolerance or increase the  value of $m$. 

Now we show how to use upper bound $\ve$ to shrink the search space. For our current purpose it is more convenient to represent $u$ as 
\[
( (a_0 + b_0\sqrt{2}) + i( a_1 + b_1\sqrt{2} ) ) / \sqrt{2}^\K.
\]
The bound $|u|^2\le\ve$ implies that $|a_0 + b_0\sqrt{2}|^2 + |a_1 + b_1\sqrt{2}|^2 \le 2^\K \ve$.  The savings are the most significant when $2^\K \ve \le 1/4$; in this case $a_j$ is uniquely defined by $b_j$ because $|a_0 + b_0\sqrt{2}| \le 1/2$ and $a_0$ must be equal to $\lfloor-b_0\sqrt{2} \rceil$. Our algorithm operates in this regime starting from $m\ge9$.

In the first stage of our search the algorithm builds list $L$ of triples $(a,b,|a + b\sqrt{2}|^2)$ such that $|a + b\sqrt{2}|^2 \le \sqrt{2^k\ve}$ and sorts it in ascending order by the third element. This allows the algorithm efficiently build the following list: 
\[
 L_{[0,\delta]} = \left\{ (a_0,b_0,c_0,d_0,r = |a_0 + b_0\sqrt{2}|^2 + |a_1 + b_1\sqrt{2}|^2 ), r \in [0,\delta] \right\}
\] 
for the chosen interval $[0,\delta].$ The algorithm again sorts the list in ascending order by the last element and finds the first element such that $( (a_0 + b_0\sqrt{2}) + i( a_1 + b_1\sqrt{2} ) ) / \sqrt{2}^\K$ can be an entry of the unitary. If it fails to find such an element then the algorithm restarts the procedure for a new list $L_{[\delta,2\delta]}.$ It keeps increasing list bounds either until it succeeds, or until it reaches the point where the lower bound for the list exceeds $2^\K \ve.$ In the second case, it reports that the initial bound was too tight.

\begin{table}[t!]
\caption{ Minimal absolute values of non-zero off-diagonal entries $u$ of unitaries with optimal \Tcount equal to $N_T.$ \label{tab:offd}}
% \begin{indented}
  
 {\footnotesize
\begin{tabular}{|c|c|}
\hline
$N_T$ & $|u|$ \\
\hline
7 & 5.604e-02\\
10 & 2.145e-02\\
11 & 1.161e-02\\
13 & 8.207e-03\\
15 & 5.803e-03\\
17 & 4.104e-03\\
18 & 3.847e-03\\
19 & 1.202e-03\\
21 & 3.520e-04\\
23 & 2.489e-04\\
27 & 5.155e-05\\
31 & 2.578e-05\\
33 & 1.823e-05\\
34 & 1.709e-05\\
35 & 9.247e-06\\
37 & 1.564e-06\\
\hline
\end{tabular}
\begin{tabular}{|c|c|}
\hline
$N_T$ & $|u|$ \\
\hline
39 & 1.106e-06\\
41 & 7.818e-07\\
43 & 2.290e-07\\
45 & 1.619e-07\\
48 & 6.196e-08\\
51 & 2.371e-08\\
53 & 1.677e-08\\
56 & 2.658e-09\\
59 & 1.017e-09\\
63 & 2.107e-10\\
69 & 8.631e-11\\
71 & 6.103e-11\\
72 & 3.303e-11\\
73 & 1.542e-11\\
74 & 1.446e-11\\
76 & 1.022e-11\\
\hline
\end{tabular}
\begin{tabular}{|c|c|}
\hline
$N_T$ & $|u|$ \\
\hline
78 & 4.837e-12\\
79 & 4.614e-12\\
80 & 1.223e-12\\
83 & 8.103e-13\\
84 & 6.113e-13\\
85 & 4.875e-13\\
87 & 9.689e-14\\
88 & 9.082e-14\\
89 & 6.851e-14\\
92 & 3.864e-14\\
93 & 2.091e-14\\
94 & 1.330e-14\\
95 & 4.156e-15\\
98 & 3.840e-15\\
100 & 2.515e-15\\
 &  \\
\hline
\end{tabular}
}
%\end{indented}
\end{table}

\tab{offd} shows the results of running the described algorithm. For some values of $N_T$ (the optimal \Tchar--count) the minimal absolute value of off-diagonal matrix entries are not included in the table: for example, there are no values for the optimal \Tcount that equal eight and nine. This means that we can achieve smaller absolute values of off-diagonal entries using unitaries with optimal \Tcount seven than using unitaries with optimal \Tcount eight or nine. The same holds for all other intermediate values of optimal \Tcount  that are not included in~\tab{offd}.  The dependence of the optimal \Tcount on the minimal absolute value of off-diagonal matrix entries is plotted on \fig{optslope}.

\begin{figure}[t]
%\begin{indented}
 \includegraphics[width=0.6\textwidth]{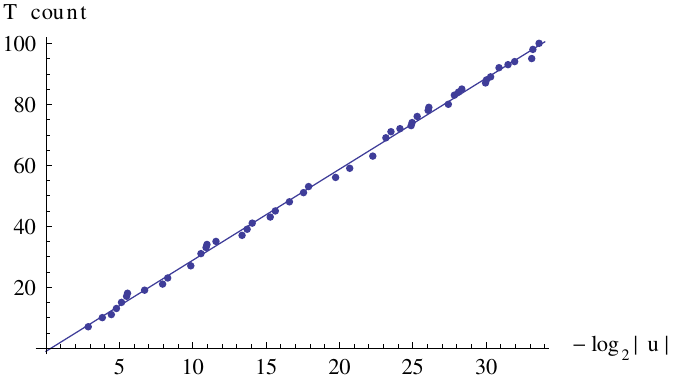}
%\end{indented}
\caption{ Here we show  the smallest absolute value, $|u|$, of the off-diagonal entiries of any unitary synthesized using the single--qubit Clifford and $T$ gate library as a function of the number of $T$--gates needed to attain the value of $|u|$. The data scales approximately as
$a\log_2(1/|u|)-1.064 $, where the value of $a$ that gives the least square error is 
$2.98$ and $a \in [2.95, 3.03]$ with probability 0.95. This is approximately $2.6$ times the value required to make a rotation of comparable size (or smaller) using composed gearbox circuits.\label{fig:optslope}}
\end{figure}

In summary, we have demonstrated a practical algorithm for finding single qubit unitaries drawn from the gate library consisting of single qubit Clifford gates and \T~that have the smallest possible absolute values of off-diagonal entries for values of the optimal \Tcount ranging from seven to one hundred. We see from the data in~\fig{optslope} and~\eq{compscale} that using ancillas and classical feedback for this task leads to improvement by approximately a factor of three in the \Tchar--count. To the best of our knowledge, this is the first example of a single qubit circuit synthesis task for which circuits including ancillas initialized to $\ket0$  and measurements with classical feedback require lower \Tcount in comparison to the optimal results involving only unitary operations.

% describe the algorithm itself
% give table of results

\section{Conclusion}

Our work provides a new method for non--deterministically synthesizing small single qubit rotations.  We use this approach to construct a floating point representation of the rotation that can lead to substantial reductions in the \Tchar--count, \Tdepth and online \Tcount used to perform the rotations; furthermore, we show that the number of operations required to synthesize these rotations is less than lower bounds for the cost of synthesizing single qubit rotations using the \ClT gate library in cases where ancilla qubits are not used.

There are several avenues of future inquiry that are suggested by our work.
  Our results can be generalized by using different recursion relations at different depths in the recursive definition of our composed gearbox circuit.  Such generalizations allow modified versions of our circuits to closely approximate a much larger set of rotation angles and may lead to increased efficiency in certain cases.
Another important application of our work is in quantum simulation where implementing terms that are nearly negligible in a Trotter--Suzuki expansion is a common problem.   This application will be considered in subsequent work.

More generally, the non--deterministic circuits here could provide a large family of circuits that could be used to synthesize rotations.  Existing methods do not perform searches over non--deterministic circuits, such as those that we introduce here.  The addition of non--deterministic circuits, such as our gearbox circuits, as standard primitives for quantum circuit synthesis may lead to a much richer family of unitaries that can be synthesized using this approach and in turn lead to substantially reduced $T$ counts for synthesizing particular gates.
%Finally, our synthesis technique requires a number of \T--gates that scales as $a\log_2(1/\theta)$ where $a\approx 1.11$.
%It remains an open question whether there exists a method with $a=1$, or if substantial improvements to our value of $a$
%can be obtained by using a larger gate library.

\acknowledgements
We would like to thank Chris Granade and Adam Paetznick for valuable comments on this work.  We also would like to acknowledge funding from USARO-DTO, CIFAR and NSERC.

Vadym Kliuchnikov is supported in part by the Intelligence Advanced Research Projects Activity (IARPA) via Department of Interior National Business Center Contract number DllPC20l66. The U.S. Government is authorized to reproduce and distribute reprints for Governmental purposes notwithstanding any copyright annotation thereon. Disclaimer: The views and conclusions contained herein are those of the authors and should not be interpreted as necessarily representing the official policies or endorsements, either expressed or implied, of IARPA, DoI/NBC or the U.S. Government.

%------------------ Appendix ------------------------
\appendix
\section{Proof of \thm{fullcost}}\label{app:fullcost}

\begin{proofof}{\thm{fullcost}}
Let $x_d$ be a random variable that describes the number of times that $C^{\circ 1}(U)$ is applied before $C^{\circ d}(U)$ is successfully implemented.  Let $\eta_i$ and $\chi_i$ be independent random variables that are distributed as $x_{d-1}$ and let $N_d$ be the number of times that the final measurement in $C^{\circ d}(U)$ is applied.  Similarly to \cor{expect}, we see that $N_q$ is geometrically distributed with mean $1/P_q$ and variance $(1-P_q)/P_q^2$ for all $q\le d$.  The recursive definition of $C^{\circ d}(U)$ then implies that,
\begin{equation}
x_d = \sum_{i=1}^{N_d} (\eta_i + \chi_i) = \sum_{i=1}^{\infty} (\eta_i + \chi_i) \openone_{\{i\le N_d\}},
\end{equation}
where $\openone_{\{i\le N_d\}}=1$ if $i\le N_d$ and is zero otherwise.  We now substitute $\phi_i = \eta_i - \mathbb{E}(\eta_i)$ and $\xi_i = \chi_i - \mathbb{E}(\chi_i)$ in order to simplify our expressions for the expectation value and the variance of $x_d$ and obtain
\begin{equation}
x_d = \sum_{i=1}^{\infty} (\phi_i + \xi_i) \openone_{\{i\le N_d\}} + 2N_d\mathbb{E}(x_{d-1}),
\end{equation}
where $\mathbb{E}(\chi_i)=\mathbb{E}({x_{d-1}})=\mathbb{E}(\eta_i)$ because both random variables are distributed identically to $x_{d-1}$.
The expectation value of $x_d$ is then
\begin{eqnarray}
\mathbb{E}(x_d)&=\mathbb{E}\left(\sum_{i=1}^{\infty} (\phi_i + \xi_i) \openone_{\{i\le N_d\}}\right) + 2\mathbb{E}(N_d)\mathbb{E}(x_{d-1})\nonumber\\
&=2\mathbb{E}(N_d)\mathbb{E}(x_{d-1}),
\end{eqnarray}
where the last equation follows from the fact that $N_d$ is independent of $\phi_i$ and $\xi_i$ and $\mathbb{E}(\phi_i)=\mathbb{E}(\xi_i)=0$.

Now unfolding the recurrence relations, and using the fact that $\mathbb{E}(x_1)=\mathbb{E}(N_1)$ we have that
\begin{equation}
\mathbb{E}({x_d})=2^{d-1}\mathbb{E}(N_d)\cdots \mathbb{E}(N_1)= \frac{2^{d-1}}{P_d\cdots P_1}.\label{eq:expxd}
\end{equation}
Since ${C}^{\circ 1}(U)$ requires one application of $U$ and one application of $U^\dagger$, $n_d = 2 x_d$  and hence~\eq{expxd} implies
\begin{equation}
\mathbb{E}(n_d) = \frac{2^d}{P_1\cdots P_{d}},\label{eq:ndmean}
\end{equation}
as claimed.

The variance of $x_d$ can be expressed as
\begin{eqnarray}
\mathbb{V}(x_d)&= \mathbb{E}(x_d^2)-\left(\mathbb{E}(x_d)\right)^2\nonumber\\
&=\mathbb{E}\!\!\left(\!\!\left(\sum_{i=1}^{\infty} (\phi_i\! +\! \xi_i) \openone_{\{i\le N_d\}}\! +\! 2N_d\mathbb{E}(x_{d-1})\right)\!\!\!\left(\sum_{j=1}^{\infty} (\phi_j\! +\! \xi_j) \openone_{\{j\le N_d\}}\! +\! 2N_d\mathbb{E}(x_{d-1})\!\right)\!\right)\!\!\nonumber\\
 &\qquad-\!4\left(\mathbb{E}(N_d)\right)^2\!(\mathbb{E}(x_{d-1}))^2\nonumber\\
&=\mathbb{E}\left(\sum_{i=1}^{\infty} (\phi_i^2 + \xi_i^2) \openone_{\{i\le N_d\}})\right)+4\mathbb{E}(N_d^2)\mathbb{E}(x_{d-1})^2-4(\mathbb{E}(N_d))^2(\mathbb{E}(x_{d-1}))^2\nonumber\\
&=2\mathbb{V}(x_{d-1})\mathbb{E}(N_d)+4\mathbb{V}(N_d)(\mathbb{E}(x_{d-1}))^2,\label{eq:vrecur}
\end{eqnarray}
where we use the independence of $N_d$, $\phi_i$ and $\xi_i$, and the fact that the expectation value of $\phi_i$ and $\xi_i$ is zero to show the last equality.  Substituting~\eq{expxd} into~\eq{vrecur} gives
\begin{equation}
\mathbb{V}(x_d)=\frac{2\mathbb{V}(x_{d-1})}{P_d}+\frac{4^{d-1}(1-P_d)}{P_d^2\cdots P_1^2}.\label{eq:vrecur2}
\end{equation}

Unfolding the recursion relations in~\eq{vrecur2} and using the fact that $P_q$ is a monotonically increasing function of $q$ for all $\theta_0< \pi/4$, we find that
\begin{eqnarray}
\mathbb{V}(x_d)&=\frac{2^{d-1}\mathbb{V}(x_{1})}{P_d\cdots P_2}+\frac{4^{d-1}(1-P_d)}{P_d^2\cdots P_1^2}+\frac{2\cdot 4^{d-2}(1-P_{d-1})P_d}{P_d^2\cdots P_1^2}+\cdots\nonumber\\
&=\frac{2^{d-1}\mathbb{V}(x_{1})}{P_d\cdots P_2}+\sum_{j=1}^{d-1}\frac{2^{2d-j-1}(1-P_{d-j+1})\prod_{k=1}^{j-1}P_{d+1-k}}{P_d^2\cdots P_1^2}\nonumber\\
&\le\frac{2^{d-1}\mathbb{V}(x_{1})}{P_d\cdots P_2}+\sum_{j=1}^{d-1}\frac{2^{2d-j-1}(1-P_{d-j+1})}{P_d^2\cdots P_1^2}\nonumber\\
&\le\frac{2^{d-1}\mathbb{V}(x_{1})}{P_d\cdots P_2}+\sum_{j=1}^{\infty}\frac{2^{2d-j-1}(1-P_{d-j+1})}{P_d^2\cdots P_1^2}\nonumber\\
&\le \frac{2^{d-1}(1-P_1)}{(P_d\cdots P_1)P_1}+\frac{2^{2d-1}(1-P_1)}{P_d^2\cdots P_1^2}= \frac{2^{2d-1}(1-P_1)}{P_d^2\cdots P_1^2}\left(1+\frac{(P_d\cdots P_1)}{2^{d}P_1}\right).
\end{eqnarray}
The result of the theorem then follows from the fact that $\mathbb{V}(n_d)=4\mathbb{V}(x_d)$, similarly to~\eq{ndmean}.
\section{Cost Analysis}\label{app:cost}
\subsection{Cost Analysis of Floating Point Synthesis}
The $T$--count required to implement the circuit synthesis can easily be deduced from our prior discussions of the costs of the components of the floating point synthesis.   The circuit $C^{(d+1)}(U_m,C^{\circ D_1},\ldots, C^{\circ D_d})$ can be implemented using a mean $T$--count that scales (for some constant $C$) as
\begin{equation}
\mathbb{E}(T_{\rm count}) \approx 8\log_2(1/\delta)+1.14\log_2(10^\gamma)+C= O(\log_2(1/\theta)).\label{eq:tcountxpr}
\end{equation}
This says that for a fixed number of digits of precision ($\log_{10}(1/\delta)$), the cost of performing floating point synthesis is lower than that required by Selinger's method by nearly a factor of $4$ in the limit of small $\theta$ (large $\gamma$).  In fact, it is actually better than the best possible scaling that can be achieved using optimal ancilla--free single qubit synthesis, as is shown in~\sec{optimal}.

\eq{tcountxpr} can be verified using the following argument.  The only difference between performing  $C^{(d)}(U_m,C^{\circ D_1},\ldots,C^{\circ D_d})$ and $C^{(d)}(C^{\circ D_1},\ldots,C^{\circ D_d})$ is that two $U_m$ gates must be synthesized and one additional control must be added to the multiply--controlled $-iX$ gate ($\Lambda^{d}(-iX)$).
The $\Lambda^{d+1}(-iX)$ gate requires two more Toffoli gates to implement than $\Lambda^{d}(-iX)$ and so the extra control does not alter the scaling from that seen for $C^{(d)}(C^{\circ D_1},\ldots,C^{\circ D_d})$.  Furthermore, the inclusion of $U_m$ will actually boost the success probability of the circuit, which actually reduces the contribution of the $C^{\circ D_j}$ gates to the $T$--count.  \eq{tcountxpr} then follows from the fact that the mean $T$--count required to implement $C^{(d)}(C^{\circ D_1},\ldots,C^{\circ D_d})$ scales approximately as $1.14\log_2(1/\phi_{\rm rem})+C'=1.14\log_2(10^\gamma)+C''$ and the fact that the cost of implementing a rotation scales as $4\log_2(1/\delta)+C'''$ for constants $C',C''$ and $C'''$.

If the number of digits of precision required is not fixed, then floating point synthesis will provide a better $T$--count than Selinger's method given that
\begin{equation}
\delta \gtrapprox (10^{-\gamma})^{\frac{143}{200}}.
\end{equation}
This suggests that, for small rotation angles, extreme precision requirements will be needed for traditional circuit synthesis algorithms to have an advantage over our floating point synthesis method.  We therefore anticipate that in most circumstances our method will be favorable for implementing small rotations, if circuits with minimal $T$--count are required.

The $T$--depth required for our synthesis method and online $T$--counts required for our method are substantially smaller.  The expected $T$--depth for the floating point implementation is
$$\mathbb{E}(T_{\rm depth})=\mathbb{E}\left(2\max\left\{T_{\rm depth}(U_m),\max_j\{T_{\rm depth}(C^{\circ D_j})\}\right\}+T_{\rm depth}(\Lambda^{d+1}(-iX))\right).$$
As mentioned in \sec{compgearbox}, the majority of the operations in $C^{\circ D_j}$ can be thought of as ancilla preparations.
This means that any such ancilla preparation steps can be shifted offline and performed in parallel.  In essence, this reduces the depth of the circuit exponentially in exchange for a logarithmic increase in the circuit width.  This can easily be seen  using~\thm{fullcost}.  The only online operation that must be performed is $HTH$, which according to~\lem{ancillaprob}, will  only have to be performed a constant number of times before $C^{\circ D_j}$ is implemented with high probability.  This implies that
\begin{equation}
\mathbb{E}(T_{\rm depth})(C^{\circ D_j})\le D_j + K'\in O(\log\log(1/\theta))\in O(\log\log(10^\gamma)),
\end{equation}
where $K'\approx 12$ is a constant that arises from having to repeat the online step a fixed number of times.
%Using \lem{ancillaprob} and the fact that $\theta_0=\pi/8$, it is easy to see that the composed gearbox circuits can be implemented with expected depth
%\begin{equation}
%\mathbb{E} (T_{\rm depth}(C^{\circ D_j})<\frac{\max_jD_j}{\frac{1-\cos(4\theta_0)}{4}+\frac{2\tan^4(\theta_0)}{1-\tan^{2}(\theta_0)}}< 12\max_j D_j.
%\end{equation}

The synthesis of $U_m$ using Selinger's method requires a $T$--depth that equals the $T$--count of the circuit.  This cost is
\begin{equation}
\mathbb{E}(T_{\rm depth}(U_m))=4\log_2(1/\delta)+K'',
\end{equation}
where $K''$ is a constant.  In our analysis this cost is assumed to be constant because the number of digits of precision, and in turn $\delta$, is assumed to be a constant.

The controlled $-iX$ gate $\Lambda^{d+1}(-iX)$ can be implemented using a depth $2\lfloor \log_2d +1\rfloor$ circuit~\cite{Sel13}.  This implies that the expected $T$--depth obeys, for some constant $K'''$,
\begin{eqnarray}
\mathbb{E}(T_{\rm depth})&\le 2 (\max_j D_j) + 8\log(1/\delta)+ 2\lfloor \log_2d +1\rfloor+K'''\nonumber\\ &\in O(\log\log(10^{\gamma})),
\end{eqnarray}
since $d\le \max_j D_j$ and $\max_j D_j \in \Theta(\log\log(10^\gamma))$.  Therefore the circuit depth varies doubly--logarithmically
with $\theta^{-1}$ and it is easy to see that the online cost follows a similar scaling.
This shows that another strong advantage of floating point synthesis is that it can easily exploit parallelism to reduce the time required to execute the circuits given that a fixed number of digits of precision are required.

\end{proofof}
\bibliographystyle{unsrt}
%\bibliography{sk}

\end{document}